\newtheorem{theorem}{Theorem}
\newtheorem{lemma}[theorem]{Lemma}
\newtheorem{corollary}[theorem]{Corollary}
\newenvironment{proof}{\textit{Proof:}}{\hfill$\square$\\}
\newcounter{constraint}[section]
\renewcommand{\mod}{ \text{ mod }  }
\newcommand{\dm}[1]{\MakeUppercase{#1}}
\newcommand{\rvec}[1]{\boldsymbol{\MakeLowercase{#1}}}
\newcommand{\dvec}[1]{\MakeLowercase{#1}}
\newcommand{\tp}{\mathrm{T}}
\newcommand{\Cesaro}{Ces\'{a}ro }
\title{\LARGE \bf
Time-invariant prefix-free source coding for MIMO LQG control
}
\author{Travis C. Cuvelier, Takashi Tanaka, and Robert W. Heath, Jr.
\thanks{This work was supported in part by the National Science Foundation under Grant No. ECCS-1711702 and and CAREER Award \#1944318. }
\thanks{T. Cuvelier is with the Department
of Electrical and Computer Engineering, The University of Texas at Austin,
TX, 78712 USA e-mail: tcuvelier@utexas.edu.}
\thanks{T. Tanaka is with the Department
of Aerospace Engineering and Engineering Mechanics, The University of Texas at Austin,
TX, 78712 USA e-mail: ttanaka@utexas.edu.}
\thanks{R. Heath is with the Department
of Electrical and Computer Engineering, North Carolina State University, Raleigh, 
NC, 27606 USA e-mail: rwheath2@ncsu.edu.}}
\begin{document}

\maketitle

\begin{abstract}
In this work we consider discrete-time multiple-input multiple-output (MIMO) linear-quadratic-Gaussian (LQG) control where the feedback consists of variable length binary codewords. To simplify the decoder architecture, we enforce a strict prefix constraint on the codewords. 
We develop a data compression architecture that provably achieves a near minimum time-average expected bitrate for a fixed constraint on the LQG performance. The architecture conforms to the strict prefix constraint and does not require time-varying lossless source coding, in contrast to the prior art. 
\end{abstract}

\section{Introduction}\label{sec:introduction} In this work, we consider discrete-time MIMO LQG control where feedback occurs via variable-length packets (codewords) of bits that we assume are conveyed reliably (without error) from a sensor/encoder to a decoder/controller. For some constraint on LQG control performance, we seek to design a sensor/encoder and decoder/controller that minimize the time-averaged expected bitrate of the binary channel.

The motivation for this formulation is communication-efficient remote control over wireless communications. In particular, we imagine a scenario where a remote sensor platform measures some dynamical system and conveys the measurements to the controller over a noiseless binary channel. At the physical and medium access control layers, reliable control of autonomous agents over wireless has motivated the development of ultra low latency reliable communication (ULLRC). While progress has been made on the development ULLRC, the fundamental scarcity of physical layer resources motivates approaches that minimize the use of these resources. In contrast to the work on ULLRC, we aim higher on the protocol stack; namely we propose to develop control and data compression (source coding) strategies to achieve satisfactory control performance with minimal communication bitrate. The minimum average length, in bits, of the binary packets the sensor encodes are an effective surrogate for the minimum amount of physical layer resources (time/bandwidth/power) required to achieve satisfactory performance for a fixed reliability (cf. e.g. \cite[Section 2.6]{hjjournal}). We impose a prefix constraint on the codewords that ensures that other users sharing the same communication medium can identify the end of each transmitted message and thus identify the channel as free-to-use. In contrast to the prior art, this work imposes a \textit{time-invariant} (TI) prefix constraint on the codewords, which simplifies the decoding architecture architecture; the end of codewords can be detected via comparison with a TI list \cite{ourlblett}. 

In the prior literature, achievability results for the problem of interest follow from asymptotically bounding the output entropy of a quantizer \cite{silvaFirst}\cite{tanakaISIT}\cite{kostinaTradeoff}. While these bounds can nearly be achieved with zero-delay lossless source coding, they generally require that the lossless code be adapted, at every timestep, to the probability mass function of the quantizer output (cf. \cite[Section IV.A]{ourJSAIT}). This adds a great deal of computational complexity to the encoder and decoder.

In this work, we propose a data compression architecture for MIMO plants based on \cite{tanakaISIT} and \cite{ourJSAIT} that attains the desired LQG performance, satisfies a TI prefix constraint, and does not require a time-varying lossless source codec. Starting from the architecture of \cite{tanakaISIT}, we prove that the quantizer output has a limiting distribution. We propose to losslessly encode the quantizations with a fixed prefix-free code adapted the limiting distribution. We prove that under this modification, the codewords satisfy a TI prefix constraint without an increase in codeword length and without the complexity of time-varying lossless source coding. The architecture nearly achieves a known lower bound on the minimum expected bitrate. In particular,  we use results from systems theory to extend the scalar TI achievability approach in \cite{ourJSAIT} to the MIMO setting. 

\subsection{Related Work}
Our work follows from the architecture for LQG control with minimum bitrate prefix-free variable-length coding in \cite{silvaFirst}. For scalar systems, \cite{silvaFirst} derived a lower bound on the bitrate of a prefix-free source codec inserted into an LQG feedback channel in terms of Massey's directed information (DI) \cite{masseyDI}. This motivated a rate-distortion problem for the tradeoff between DI and LQG control performance. Using entropy-coded dithered quantization (ECDQ), \cite{silvaFirst} proved that the DI lower bound was nearly achievable. In ECDQ, uses a shared sequence of IID uniform random variables to whiten the error induced by quantization process. The quantization are then encoded via an entropy source code (e.g. a Shannon-Fano-Elias code). The rate-distortion formulation was generalized to MIMO plants in \cite{SDP_DI}. Likewise, the achievability approach was generalized in \cite{tanakaISIT}. New analytical lower bounds (for MIMO systems) for the bitrate/control performance tradeoff were derived in \cite{kostinaTradeoff}. It was further demonstrated that such bounds were nearly achievable \textit{without} the use of a dither signal in the high communication rate/strict control cost regime \cite{kostinaTradeoff}.  The achievablilty approaches in \cite{silvaFirst}, \cite{tanakaISIT}, and \cite{kostinaTradeoff} implicitly assume that quantizations are encoded using a time-varying lossless source code \cite{ourJSAIT}. More recently, \cite{milanCorrection} and \cite{ourlblett} refined the lower bound analysis in \cite{silvaFirst}, clarifying that the bound still applies even when the encoder and decoder share a dither signal. 

The use of time-varying source codes complicates  compression architectures; generally speaking, the codebooks used to encode quantizations must be updated at every timestep. Furthermore, in network settings, a source code's prefix properties may be used by receivers to detect the end of transmissions. If a TI prefix code is used, the set of prefixes against which received codewords are compared is constant over time \cite{ourlblett}. In the time-varying case, there are no such guarantees. In \cite{ourJSAIT} it was shown that when a dither signal is available, there exists a TI quantization and coding scheme for scalar plants that nearly achieves the DI lower bound. In this work, we generalize this to the MIMO setting; namely we prove the existence of a quantizer and TI source codec for MIMO plants that nearly achieves the directed information lower bound \cite{SDP_DI}. We believe this to be the first such ``time-invariant" achievability  result for MIMO plants in the literature. 

Work on fixed bitrate and event-driven communication strategies for control are also relevant to this work. In \cite{fixlenalgs}, numerical experiments demonstrated that Lloyd-Max style quantization could be used to attain a control performance competitive with known variable-rate upper bounds. However, the quantizer in \cite{fixlenalgs} is time-varying, and many experimental realizations had time-average control costs that greatly exceeded the variable-length bounds \cite{fixlenalgs}. More recently, \cite{lowbrperiodic} have proposed the the mean time to quantizer saturation as a metric for the analysis of fixed-length coding in LQG feedback control. Periodic coding strategies were devised, and their performances and  escaped times were analyzed theoretically and with an experiment \cite{lowbrperiodic}. More recently \cite{guo2021optimal} considered tracking a scalar Markov source under an event-based communication paradigm. Under this formulation, an encoder monitoring a plant can send binary codewords (without prefix constraints) at arbitrary (continuous-time) instants to a synchronized decoder \cite{guo2021optimal}. For a constraint on estimator error, the minimum bitrate communication policy was deduced. In this work, as in \cite{tanakaISIT}, \cite{kostinaTradeoff}, \cite{fixlenalgs}, we consider a sampled discrete-time control and communication model. While event-driven schemes like \cite{guo2021optimal} may be able to achieve smaller bitrates than these works, the models for control and communication in \cite{tanakaISIT}, \cite{kostinaTradeoff}, \cite{fixlenalgs} are more amenable to real-world sampled data systems with finite sensing and communication bandwidths.  
\subsection{Notation} We denote constant scalars and vectors in lowercase $x$, scalar and vector random variables in boldface $\rvec{x}$, and matrices by capital letters $X$. We let $[\dvec{x}]_{i}$ denote the $i^{\mathrm{th}}$ element of $\dvec{x}$, $I_{m}$ denote the $\mathbb{R}^{m\times m}$ identity matrix,  $0_{m\times m}$ the respective zero matrix. Let $\lVert X \rVert_{2}$ denote the largest singular value of $X$. We write P(S)D for ``symmetric positive (semi)definite", and let $\mathbb{S}^{m}_{+}$ denote the set of $m\times m$ PSD matrices. We let $\succ$, $\succeq$ denote the standard partial order on the PSD cone, e.g. if $A,B\in\mathbb{R}^{m}$, we write $A\succ B$ if $A-B$ is PD, likewise $A\succeq B$ if $A-B$ is PSD. 
For time domain sequences, let $\{\rvec{x}_{t}\}$ denote $(\rvec{x}_{0},\rvec{x}_{1},\dots)$, $\rvec{x}_{a}^{b}$ denote $(\rvec{x}_{a},\dots,\rvec{x}_{b})$ if $b\ge a$, $\rvec{x}_{a}^{b}=\emptyset$ otherwise. Let $\rvec{x}^{b}= \rvec{x}_{0}^{b}$. Denote the set of finite-length binary strings by $\{0,1\}^*$. For $a\in \{0,1\}^{*}$, let $\ell(a)$ be the length of $a$ in bits. Let $\Delta\mathbb{Z}^{m}$ denote the set of $m-$tuples whose elements are integer multiples of $\Delta$. Define the set ${\mathcal{B}}^{m}(\Delta)= \{\dvec{x}\in\mathbb{R}^{m}: \lVert\dvec{x} \rVert_{\infty}\le \frac{\Delta}{2} \}$. For a set $\mathcal{S}$, define the indicator function of $\dvec{x}\in\mathcal{S}$ as $1_{\dvec{x}\in\mathcal{S}}$.  For a topological space $\mathbb{T}$, let $\mathbb{B}(\mathbb{T})$ denote the standard Borel $\sigma$-algebra of $\mathbb{T}$. For Euclidean spaces, let $\lambda$ denote the Lebesgue measure. We use PMF for ``probability mass function", PDF for ``probability density function" (with respect to $\lambda$),  $\rvec{a}\perp\rvec{b}$ for ``$\rvec{a}$ and $\rvec{b}$ are independent", and $H$ for entropy. 

\section{System Model and Problem Formulation}\label{sec:systemmodel}
We consider the system model depicted in Fig. \ref{fig:ditharch}. The system to be controlled is a TI, multidimensional, linear dynamical system  (i.e., a MIMO plant) plant controlled via a feedback model where communication occurs over an ideal (delay and error free) binary channel. The plant is fully observable to an encoder/sensor block, which conveys a variable-length binary codeword $\rvec{a}_{t}\in\{0,1\}^*$ over the channel to a combined decoder/controller. Upon receipt of the codeword, the decoder/controller designs the control input. Denote the state vector $\rvec{x}_t\in \mathbb{R}^{m}$, the control input $\rvec{u}_{t}\in\mathbb{R}^{u}$, and let $\rvec{w}_{t}\sim\mathcal{N}(\dvec{0},{W})$ denote processes noise assumed to be IID over time. We assume ${W}\succ{0}_{m\times m}$, i.e., the process noise covariance is full rank. We assume assume that $\rvec{x}_{0}\sim\mathcal{N}(\dvec{0},{X}_0)$ for some $X_0\succeq 0$.  For ${A}\in\mathbb{R}^{m\times m}$ the system matrix  and ${B}\in\mathbb{R}^{m\times u}$ the  feedback gain matrix, for $t\ge 0$ the plant dynamics are given by $\rvec{x}_{t+1} = {A}\rvec{x}_{t}+{B}\rvec{u}_{t}+\rvec{w}_{t}$.  To ensure finite control cost is attainable, we assume $({A},{B})$ are stabilizable. We assume that the encoder and decoder share access to a random uniform dither signal $\{\rvec{d}_{t}\}$. We assume that, for some $\Delta>0$ to be specified, the random vectors $\rvec{d}_{t}\in\mathbb{R}^m$ have components that are independently uniformly distributed on $[-\Delta/2,\Delta/2]$, and that the sequence $\{\rvec{d}_{t}\}$ is IID over time.  We assume that  $\{\rvec{w}_{t}\}$, $\{\rvec{d}_{t}\}$, and $\rvec{x}_{0}$ are mutually independent. In real-world systems, this \textit{shared randomness} can be effectively accomplished using synchronized pseudorandom number generators. We assume that the encoder/sensor and the decoder/controller may be randomized given their inputs. The encoder/sensor policy in Fig. \ref{fig:ditharch} is a sequence of causally conditioned Borel measurable kernels denoted $\mathbb{P}_{\mathrm{E}}[\rvec{a}_{0}^{\infty}|| \rvec{d}_{0}^{\infty},\rvec{x}_{0}^{\infty}] = \left\{ \mathbb{P}_{\mathrm{E}}[\rvec{a}_{t}|\rvec{a}_{0}^{t-1},\rvec{d}_{0}^{t},\rvec{x}_{0}^{t}]\right\}_t$, and that 
 corresponding decoder/controller policy is given by $\mathbb{P}_{\mathrm{C}}[\rvec{u}^{\infty}|| \rvec{a}^{\infty},\rvec{d}^{\infty}] = \left\{ \mathbb{P}_{\mathrm{C}}[\rvec{u}_{t}|\rvec{a}^{t},\rvec{d}^{t},\rvec{u}^{t-1}]\right\}_t$. 
 Note that under the assumed dynamics, $\rvec{x}^{t}$ is a deterministic function of $\rvec{x}$, $\rvec{a}^{t-1}$, $\rvec{u}^{t-1}$, and $\rvec{w}^{t-1}$.  We encode  conditional independence assumptions in the system model by a factorization of the one-step transition kernels for $\rvec{a}_{t}$, $\rvec{d}_{t}$, $\rvec{u}_{t}$, and  $\rvec{w}_{t}$. For $t\ge 0$, we assume the kernels factorize via
 \begin{subequations}\label{eq:ditherFactorization}
\begin{multline}
      \mathbb{P}[\rvec{a}_{t+1},\rvec{u}_{t+1}|\rvec{a}^{t},\rvec{d}^{t+1},\rvec{u}^{t},\rvec{w}^{t},\rvec{x}_{0}] =\\\mathbb{P}_{\mathrm{E}}[\rvec{a}_{t+1}|\rvec{a}^{t},\rvec{d}^{t+1},\rvec{x}^{t+1}]\mathbb{P}_{\mathrm{C}}[\rvec{u}_{t+1}|\rvec{a}^{t+1},\rvec{d}^{t+1},\rvec{u}^{t}],
\end{multline}   
\begin{multline} \mathbb{P}[\rvec{a}_{t+1},\rvec{d}_{t+1},\rvec{u}_{t+1},\rvec{w}_{t+1}|\rvec{a}^{t},\rvec{d}^{t},\rvec{u}^{t},\rvec{w}^{t},\rvec{x}_{0}] = \\\mathbb{P}[\rvec{a}_{t+1},\rvec{u}_{t+1}|\rvec{a}^{t},\rvec{d}^{t},\rvec{u}^{t},\rvec{w}^{t},\rvec{x}_{0}]\mathbb{P}[\rvec{d}_{t+1}]\mathbb{P}[\rvec{w}_{t+1}],
\end{multline}
\end{subequations} and that we have initially $\mathbb{P}[\rvec{a}_0,\rvec{d}_0,\rvec{u},\rvec{w}_{0}|\rvec{x}_{0}]$ $=$ $\mathbb{P}[\rvec{w}_{0}]\mathbb{P}[\rvec{d}_{0}]\mathbb{P}_{\mathrm{E}}[\rvec{a}_{0}|\rvec{x}_{0},\rvec{d}_{0}]$ $\mathbb{P}_{\mathrm{C}}[\rvec{u}_{0}|\rvec{a}_{0},\rvec{d}_0]$. Implications of these factorization are discussed in the caption of Fig. \ref{fig:ditharch}. 
\begin{figure}
	\centering
	\includegraphics[scale = .2]{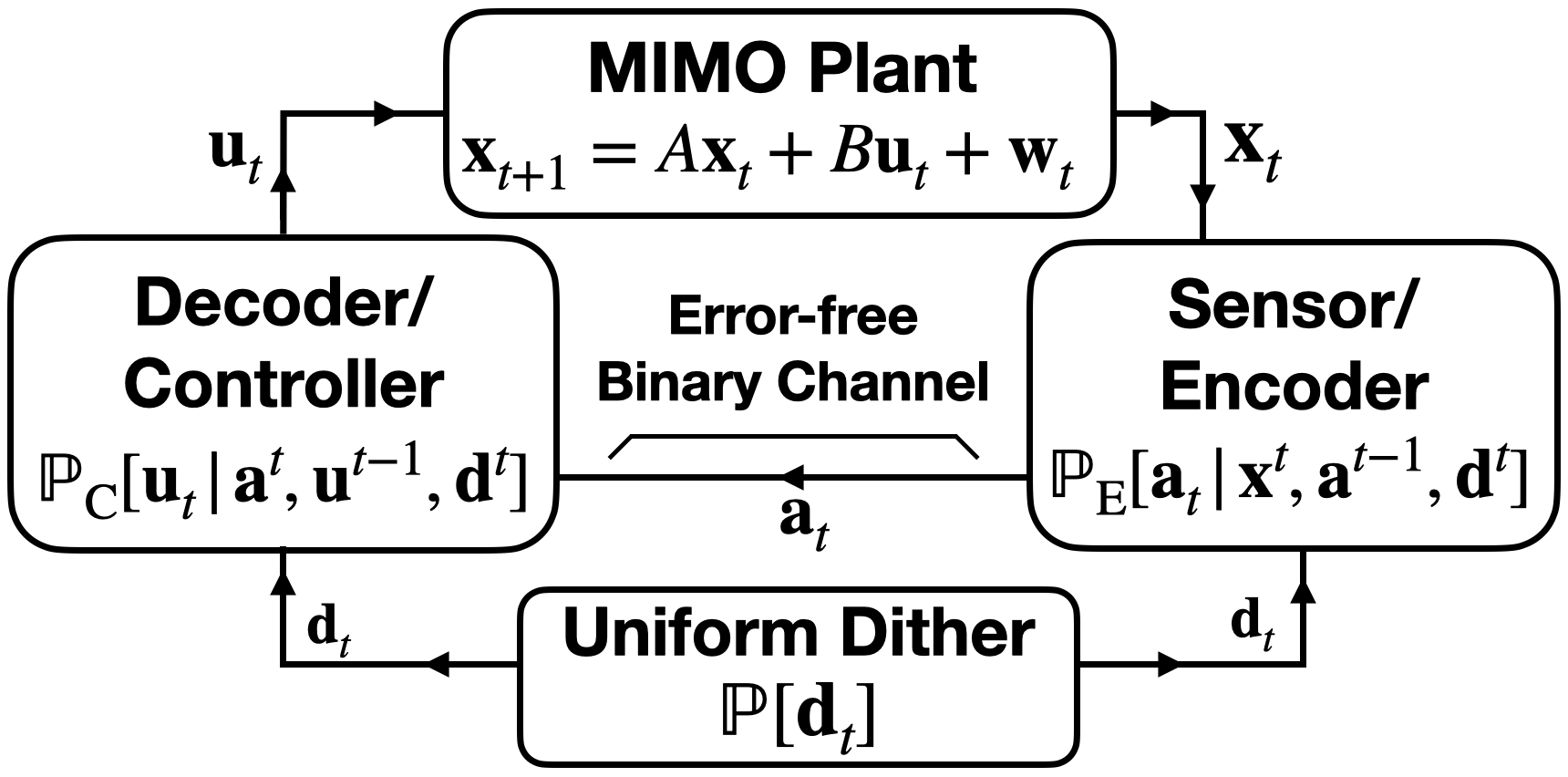}
	\vspace{-.2cm}
	\caption{ The encoder can select the codeword $\rvec{a}_{t}$ randomly given ``the information it knows at time $t$". When $\rvec{a}_{t}$ arrives at the decoder, the decoder can randomly generate the control $\rvec{u}_{t}$ given $\rvec{a}_{t}$ as well as its own prior knowledge. The encoder and decoder share access to $\rvec{d}_{t}$, which is IID and generated ``independently" of all past system variables.  }\label{fig:ditharch}
\end{figure}

We require the codewords to conform to a TI prefix constraint, namely that for all $i,j$ and distinct $a_1,a_2\in\{0,1\}^*$ with both $\mathbb{P}[\rvec{a}_i=a_1]>0$ and $\mathbb{P}[\rvec{a}_j=a_2]>0$, $a_1$ is not a prefix of $a_2$ and vice-versa. Thus, not only must the support of the random variable $\rvec{a}_{i}$ be a set of prefix-free binary codewords, we also require the union of the supports of the $\{\rvec{a}_{i}\}$ be prefix-free. This is a stronger prefix constraint than was considered in \cite{tanakaISIT}. The TI constraint may enable for more computationally efficient communication resource sharing; the end of a codeword can be unambiguously identified via comparing received messages against the TI union of supports. We are interested in the following optimization, over policies $\mathbb{P}_\mathrm{E}, \mathbb{P}_{\mathrm{C}}$ that conform to the prefix constraint:
\begin{equation}\label{eq:codewordLenghtOptimization}
\mathcal{L}(\gamma) = \left\{ \begin{aligned}
& \underset{\mathbb{P}_\mathrm{E}, \mathbb{P}_{\mathrm{C}}}{\inf}  \text{ }\frac{1}{T}\sum\nolimits_{t=0}^{T-1}\mathbb{E}[\ell(\rvec{a}_{t})] \\ &\text{s.t. }   \frac{1}{T}\sum\nolimits_{t=0}^{T-1}\mathbb{E}[\lVert \rvec{x}_{t+1} \rVert_{{Q}}^{2} +\lVert \rvec{u}_{t} \rVert_{{\Phi}}^{2}] \le \gamma,
\end{aligned} \right.
\end{equation} where ${Q}\succeq {0}$, ${\Phi}\succ {0}$, and $\gamma$ is the maximum tolerable LQG cost. Let $S$ be a stabilizing solution to the discrete algebraic Riccati equation (DARE) $A^{\mathrm{T}}SA-S-A^{\mathrm{T}}SB(B^{\mathrm{T}}SB+\Phi)^{-1}B^{\mathrm{T}}SA+Q = 0$, $K=-(B^{\mathrm{T}}SB+\Phi)^{-1}B^{\mathrm{T}}SA$, and $\Theta = K^{\mathrm{T}}(B^{\mathrm{T}}SB+\Phi)K$. Consider the optimization 
\begin{align}\label{eq:threestageRDF}
    \mathcal{R}(\gamma) &=& \left\{ \begin{aligned}
& \underset{\substack{P,\Pi, \in\mathbb{R}^{m\times m}\\P,\Pi\succeq 0} }{\inf} \frac{1}{2}(-\log_{2}{\det{\Pi}}+\log_{2}{\det{W}} )\\ &\text{ }\text{s.t. }  \mathrm{Tr}(\Theta P)+\mathrm{Tr}(WS)\le \gamma\text{,  }\\&\text{ } P\preceq APA^\mathrm{T}+W\text{, } \\&\text{ }\text{ }\text{ }\begin{bmatrix}P-\Pi & PA^{\mathrm{T}} \\ AP & APA^{\mathrm{T}}+W \end{bmatrix}\succeq0 
\end{aligned}\right. 
\end{align} A consequence of \cite{ourlblett} (cf. \cite{ourJSAIT} and \cite{SDP_DI}) is the lower bound 
$\mathcal{R}(\gamma)\le \mathcal{L}(\gamma)$. In the sequel, we will analyze the bitrate of the proposed achievability scheme in terms of $\mathcal{R}(\gamma)$.

\section{Achievability Architecture} 
\begin{table}[]
\begin{tabular}{ccllc}
\hline
\multicolumn{1}{|c|}{Variable}                & \multicolumn{4}{c|}{Description/Key Equations}      \\ \hline\hline
\multicolumn{1}{|c|}{$\hat{P}\in\mathbb{S}_{+}^{m}$}                     & \multicolumn{4}{c|}{Minimizing $P$ of (\ref{eq:threestageRDF}), asymptotic KF estimator error}                                                  \\ \hline
\multicolumn{1}{|c|}{$\hat{P}_{+}\in\mathbb{S}_{+}^{m}$}                     & \multicolumn{4}{c|}{$\hat{P}_{+}=A\hat{P}A^{\mathrm{T}}+W$, asymptotic KF prediction error }                                                  \\ \hline
\multicolumn{1}{|c|}{$C\in\mathbb{R}^{m\times m}$}                     & \multicolumn{4}{c|}{Measurement matrix, $\hat{P}^{-1}-\hat{P}_{+}^{-1}=C^{\mathrm{T}}C\frac{12}{\Delta^{2}}$ }                                                  \\ \hline
\multicolumn{1}{|c|}{$\Delta\in \mathbb{R}$} & \multicolumn{4}{c|}{Quantizer/dither sensitivity, $\hat{P}^{-1}-\hat{P}_{+}^{-1}=C^{\mathrm{T}}C\frac{12}{\Delta^{2}}$}                                 \\ \hline
\multicolumn{1}{|c|}{$K\in\mathbb{R}^{u\times m}$}                     & \multicolumn{4}{c|}{Certainty equivalent feedback control gain}                                \\ \hline
\multicolumn{1}{|c|}{$\overline{\rvec{x}}_{t|t-1} \in\mathbb{R}^{m}$}                     & \multicolumn{4}{c|}{TI KF prediction of $\mathbf{x}_{t}$}                                \\ \hline 
\multicolumn{1}{|c|}{$\overline{\rvec{x}}_{t|t} \in\mathbb{R}^{m}$}                     & \multicolumn{4}{c|}{TI KF estimate of $\mathbf{x}_{t}$. $\mathbf{u}_{t}=K\overline{\rvec{x}}_{t|t}$}                                \\ \hline 
\multicolumn{1}{|c|}{$\overline{\rvec{e}}_{t} \in\mathbb{R}^{m}$}                     & \multicolumn{4}{c|}{KF predict error $\overline{\rvec{e}}_{t}=\overline{\rvec{x}}_{t|t-1}-\mathbf{x}_{t}$, $\mathbb{E}[\overline{\rvec{e}}_{t}\overline{\rvec{e}}_{t}^{\mathrm{T}}]\rightarrow \hat{P}_{+}$}                                \\ \hline \multicolumn{1}{|c|}{$\rvec{d}_{t} \in\mathbb{R}^{m}$}                     & \multicolumn{4}{c|}{Dither sequence, IID elements $\sim\text{Uniform}(\Delta)$}                                \\ \hline
\multicolumn{1}{|c|}{$\rvec{q}_{t} \in\Delta\mathbb{Z}^{m}$}                     & \multicolumn{4}{c|}{Quantization $\rvec{q}_{t}=Q_{\Delta}(C\rvec{e}_{t}+\rvec{d}_{t})$, encoded into $\rvec{a}_{t}$}                                \\ \hline \multicolumn{1}{|c|}{$\rvec{\tilde{q}}_{t} \in\mathbb{R}^{m}$}                     & \multicolumn{4}{c|}{Reconstruction $\rvec{\tilde{q}}_{t}=\rvec{q}_{t}-\rvec{d}_{t}$,  error $\rvec{v}_{t}=\rvec{\tilde{q}}_{t}-C\mathbf{e}_{t}$}      
      \\ \hline \multicolumn{1}{|c|}{$\rvec{v}_{t} \in\mathbb{R}^{m}$}                     & \multicolumn{4}{c|}{$\rvec{v}_{t}\perp\mathbf{e}_{t}$, $\rvec{v}_{t}\perp\mathbf{x}_{t}$, $\mathbb{E}[\rvec{v}_{t}\rvec{v}_{t+k}^{\mathrm{T}}] = I_{m}\frac{\Delta^2}{12}\mathbb{1}_{k=0}$ }   
\\ \hline \multicolumn{1}{|c|}{$\rvec{y}_{t} \in\mathbb{R}^{m}$}                     & \multicolumn{4}{c|}{Effective measurement at decoder $\rvec{y}_{t}=C\mathbf{x}_{t}+\mathbf{v}_{t}$}                                \\ \hline\\
\end{tabular}
\caption{}
\vspace{-.9cm}
\end{table}
The achievability approach we proposed is demonstrated in Fig. \ref{fig:achievability}. It is almost identically to the time-varying MIMO achievability approach in \cite[Section IV.B]{ourJSAIT} with the exception that the lossless Shannon-Fano-Elias source codec used to encode the quantizations is TI. We first describe the signals in Fig. \ref{fig:achievability} before summarizing relevant results from \cite{ourJSAIT} that simplify the analysis. Let $\hat{P}$ denote the minimizing $P$ from (\ref{eq:threestageRDF}), let $\hat{P}_{+}= A\hat{P}A^{\mathrm{T}}+W$, 
and let $\Delta \in \mathbb{R}$, $\Delta>0$, and $C\in \mathbb{R}^{m\times m}$ be such that     $\hat{P}^{-1}-\hat{P}_{+}^{-1}=C^{\mathrm{T}}C\frac{12}{\Delta^{2}}$. In this way, $\hat{P}$ ($\hat{P}_{+}$ ) is interpreted as the asymptotic (prediction) error covariance attained by a Kalman filter tracking the source process $\{\rvec{x}_{t}\}$ under a measurement model $\rvec{y}_{t}=C\rvec{x}_{t}+\rvec{v}_{t}$, where $\mathbb{E}[\rvec{v}_{t}\rvec{v}_{t+k}^{\mathrm{T}}] = I_{m}\frac{\Delta^2}{12}\mathbb{1}_{k=0}$, $\mathbb{E}[\rvec{v}_{t}\rvec{x}_{0}^{\mathrm{T}}]=\mathbb{E}[\rvec{v}_{t}\rvec{w}_{t+k}^{\mathrm{T}}]=0_{m\times m}$. We will show that this measurement model is attained by the decoder in Fig. \ref{fig:achievability} using dithered quantization. 
\begin{figure}[h] 
	\centering
	\includegraphics[scale = .231]{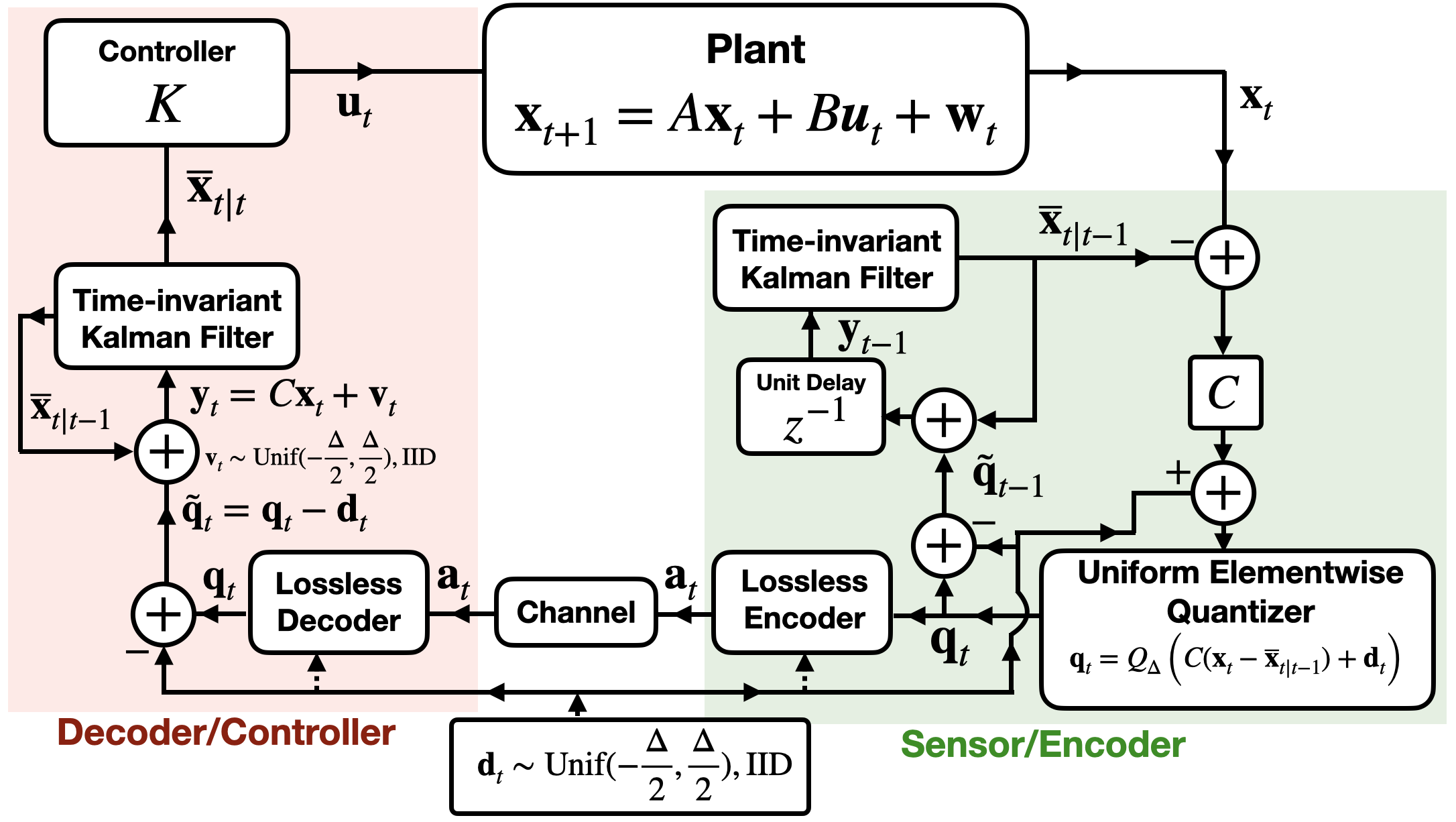}
    \vspace{-.3cm}
	\caption{The achievability architecture.  Descriptions of the variables may be found in Table I. }    \vspace{-.26cm}
\label{fig:achievability}
\end{figure}

Define a uniform elementwise quantizer via
$Q_{\Delta}\colon\mathbb{R}^{m}\rightarrow\Delta\mathbb{Z}^{m}$ via $  [Q_{\Delta}(\dvec{x})]_{i} =  k\Delta, \text{ if }[\dvec{x}]_{i}\in[k\Delta-\Delta/2,k\Delta+\Delta/2),$
 e.g. each element of the vector $\dvec{x}$ is rounded to the nearest multiple of $\Delta$.  Both the encoder and decoder in Fig. \ref{fig:achievability} operate identical time invariant Kalman filters that compute recursive estimates of $\rvec{x}_{t}$ given measurements received by the decoder. Let $\overline{\rvec{x}}_{t|t-1}$ denote the prior estimate at time $t$ and $\rvec{\overline{x}}_{t|t}$ denote the posterior. Set $\rvec{\overline{x}}_{0|-1}=0$. At every time $t$ the encoder and decoder produces the Kalman filter prediction error 
\begin{align}\label{eq:kferrordef}
    \rvec{e}_{t} = \rvec{x}_{t}-\overline{\rvec{x}}_{t|t-1},
\end{align} and the linear measurement innovation $ \rvec{e}_{t}$. It then produces the discrete, dithered quantization $q_{t} = Q_{\Delta}(C\rvec{e}_{t}+\rvec{d}_{t})$. It encodes the discrete quantization losslessly into the codeword $\rvec{a}_{t}$ which it transmits to the decoder. Upon receiving $\rvec{a}_{t}$, the decoder can recover $\rvec{q}_{t}$ exactly. It then produces the reconstruction $\rvec{\tilde{q}}_{t} =\rvec{q}_{t}-\rvec{d}_{t}$. Let $\rvec{v}_{t} = \rvec{\tilde{q}}_{t}-C\rvec{e}_{t}$ denote the reconstruction error. It then produces the centered measurement $\rvec{y}_{t}=\rvec{\tilde{q}}_{t}+C\rvec{x}_{t|t-1}$ (equivalently $\rvec{y}_{t} =C\rvec{x}_{t}+\rvec{v}_{t}$). By \cite[Lemma 1]{tanakaISIT}, the elements  $[\rvec{v}_{t}]_{i}$ are mutually independent and uniform on $[-\Delta/2,\Delta/2]$ and furthermore $\rvec{v}_{t}\perp \rvec{x}_{t}$. Denote the TI Kalman filter gain $J= \hat{P}_{+}{C}^{\mathrm{T}}({C}\hat{P}_{+}{C}^{\mathrm{T}}+I_{m\times m}\frac{\Delta^{2}}{12})^{-1}$. The decoder updates its estimate via
\begin{align}\label{eq:askfup}
    \rvec{\overline{x}}_{t|t} = \rvec{\overline{x}}_{t|t-1}+J(\rvec{y}_{t}-C\rvec{\overline{x}}_{t|t-1}), 
\end{align} and applies the certainty equivalent control input $\rvec{u}_{t} = K \rvec{\overline{x}}_{t|t}$. It then computes the predict estimate $\rvec{\overline{x}}_{t|t-1} = {A}\rvec{\overline{x}}_{t-1|t-1}+{B}\rvec{u}_{t-1}$. Since the encoder knows the quantizations $\{\rvec{q}_{t}\}$ and the dither $\{\rvec{d}_{t}\}$, it can recover the sequence of measurements $\{\rvec{y}_{t}\}$. The encoder can thus compute the sequence of TI Kalman filter estimates $\{\rvec{\overline{x}}_{t|t-1},\rvec{\overline{x}}_{t|t}\}$. While we have yet to define how $\rvec{q}_{t}$ is encoded into $\rvec{a}_{t}$, we have the following. 
\begin{lemma}\label{lemm:summarylemma}
So long as the decoder recovers $\rvec{q}_{t}$ exactly at every $t$, we have that 1) $\rvec{v}_{t}\perp \rvec{e}^{t},\rvec{v}^{t-1} \rvec{w}^{t},\rvec{x}_{0}$, 2) $\lim_{t\rightarrow\infty}H(\rvec{q})_{t}\le \mathcal{R}(\gamma) + m\left(1+\frac{1}{2}\log_{2}\left(\frac{2\pi e}{12}\right)\right)$, 3) $\lim_{T\rightarrow \infty} \sum_{i=0}^{T} \mathbb{E}[\lVert\rvec{x}_{t} \rVert_2^{2}+\lVert\rvec{u}_{t} \rVert_2^{2} ] \le \gamma$.
\end{lemma}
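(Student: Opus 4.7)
My plan is to address the three assertions in the order \textbf{(1)}, \textbf{(3)}, \textbf{(2)}, because the steady-state Kalman filter (KF) covariance fact established in \textbf{(3)} is reused in \textbf{(2)}, and \textbf{(2)} is where the main technical work lies.

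\textbf{Part (1)} is a direct conditional application of the subtractive-dither identity. By the factorization (\ref{eq:ditherFactorization}) the dither $\rvec{d}_{t}$ is independent of every measurable function of the past primitives $(\rvec{x}_{0},\rvec{w}^{t-1},\rvec{d}^{t-1},\rvec{a}^{t-1})$; in particular it is independent of $\rvec{e}^{t}$, $\rvec{v}^{t-1}$, $\rvec{w}^{t}$, and $\rvec{x}_{0}$, since $\rvec{e}_{t}$ is built by the TI KF recursion from these primitives and $\rvec{v}^{t-1}$ depends only on $(\rvec{e}^{t-1},\rvec{d}^{t-1})$. Conditioning on an arbitrary value of this tuple, \cite[Lemma~1]{tanakaISIT} gives that $\rvec{v}_{t}=\rvec{q}_{t}-\rvec{d}_{t}-C\rvec{e}_{t}$ is uniform on $[-\Delta/2,\Delta/2]^{m}$ and independent of $C\rvec{e}_{t}$; averaging the product law over the tuple delivers the claimed unconditional independence.

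\textbf{Part (3)} is the standard LQG certainty-equivalence calculation. Substituting $\rvec{u}_{t}=K\overline{\rvec{x}}_{t|t}$ and completing the square against $S$ via the DARE identity $Q+A^{\mathrm T}SA-S=K^{\mathrm T}(B^{\mathrm T}SB+\Phi)K=\Theta$ yields $\mathbb{E}[\lVert\rvec{x}_{t+1}\rVert_{Q}^{2}+\lVert\rvec{u}_{t}\rVert_{\Phi}^{2}]=\mathrm{Tr}(WS)+\mathrm{Tr}(\Theta\,\mathrm{Cov}(\rvec{x}_{t}-\overline{\rvec{x}}_{t|t}))$ plus telescoping boundary terms that vanish under \Cesaro averaging. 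The posterior KF error covariance converges to $\hat P$ under the effective measurement model $\rvec{y}_{t}=C\rvec{x}_{t}+\rvec{v}_{t}$ by standard Riccati convergence, where the required detectability/stabilizability follows from feasibility of $\hat P$ in (\ref{eq:threestageRDF}) together with $W\succ 0$; the \Cesaro limit is then $\mathrm{Tr}(WS)+\mathrm{Tr}(\Theta\hat P)\le\gamma$ by the first SDP constraint.

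\textbf{Part (2)}, the main obstacle, I would handle in three substeps. First, since the encoder codes $\rvec{q}_{t}$ without conditioning on $\rvec{d}_{t}$, I would decompose $H(\rvec{q}_{t})=H(\rvec{q}_{t}\mid\rvec{d}_{t})+I(\rvec{q}_{t};\rvec{d}_{t})$ and absorb the mutual-information gap into an ``at most $m$ bits'' slack via a per-coordinate argument exploiting $\Delta$-periodicity of the conditional PMF $\Pr(\rvec{q}_{t}=k\Delta\mid \rvec{d}_{t}=d)$ in $d$. Second, the Zamir--Feder ECDQ identity (applicable because $\rvec{v}_{t}$ is uniform and independent of $C\rvec{e}_{t}$ by Part (1)) gives $H(\rvec{q}_{t}\mid\rvec{d}_{t})=h(C\rvec{e}_{t}+\rvec{v}_{t})-m\log_{2}\Delta$, which Gaussian maximum entropy and the convergence $\mathrm{Cov}(\rvec{e}_{t})\to\hat P_{+}$ from Part (3) bound asymptotically by $\tfrac{m}{2}\log_{2}(2\pi e)+\tfrac{1}{2}\log_{2}\det(C\hat P_{+}C^{\mathrm T}+\tfrac{\Delta^{2}}{12}I)-m\log_{2}\Delta$. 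Third, I would collapse the determinant using the identity $\det(C\hat P_{+}C^{\mathrm T}+\tfrac{\Delta^{2}}{12}I)=(\tfrac{\Delta^{2}}{12})^{m}\det\hat P_{+}/\det\hat P$, which follows from $\hat P^{-1}-\hat P_{+}^{-1}=C^{\mathrm T}C\tfrac{12}{\Delta^{2}}$ by Sylvester's determinant identity, and combine it with the SDP-optimum identity $\det\hat\Pi=\det\hat P\cdot\det W/\det\hat P_{+}$, equivalent to $\hat\Pi=\hat P-\hat P A^{\mathrm T}\hat P_{+}^{-1}A\hat P$ with both matrix constraints in (\ref{eq:threestageRDF}) tight; this collapses the bound to $\mathcal{R}(\gamma)+m(1+\tfrac{1}{2}\log_{2}(2\pi e/12))$. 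The trickiest points I expect are (i) pinning down the ``$+m$'' slack arising from the encoder's lack of conditioning on $\rvec{d}_{t}$, and (ii) verifying tightness of both matrix constraints in (\ref{eq:threestageRDF}) at the optimum, which should follow from a complementary-slackness or perturbation argument (any slack would permit a strict decrease of the SDP objective).
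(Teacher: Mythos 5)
The paper itself does not prove this lemma in-line: part (1) is deferred to \cite[Lemma 1]{tanakaISIT} and parts (2)--(3) to \cite[Theorem IV.5]{ourJSAIT}, so your proposal is properly judged as a reconstruction of those cited arguments. As such it follows the intended route: the subtractive-dither (crypto) lemma applied conditionally on the past for (1), certainty-equivalence with the DARE completion of squares and convergence of the fixed-gain filter error covariance to $\hat{P}$, $\hat{P}_{+}$ under the effective measurement $\rvec{y}_{t}=C\rvec{x}_{t}+\rvec{v}_{t}$ for (3) (note this is Lyapunov, not Riccati, convergence, since the gain $J$ is fixed; stability of $A-LC$ is what the paper cites \cite{RDE_convergence} for), and the ECDQ identity $H(\rvec{q}_{t}\mid\rvec{d}_{t})=h(C\rvec{e}_{t}+\rvec{v}_{t})-m\log_{2}\Delta$ plus Gaussian maximum entropy and the collapse through $\hat{P}^{-1}-\hat{P}_{+}^{-1}=C^{\tp}C\tfrac{12}{\Delta^{2}}$ for (2). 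Your determinant bookkeeping is right, with one correction: you do not need (and in general do not have) the constraint $P\preceq APA^{\tp}+W$ active at the optimum of (\ref{eq:threestageRDF}); only the LMI in $\Pi$ must hold with equality, which is immediate because $\Pi$ enters only through $\log_{2}\det\Pi$ and that LMI, so one may always enlarge $\Pi$ to the Schur complement $\hat{P}-\hat{P}A^{\tp}\hat{P}_{+}^{-1}A\hat{P}$; then $\det\hat{\Pi}=\det\hat{P}\det W/\det\hat{P}_{+}$ with $\hat{P}_{+}:=A\hat{P}A^{\tp}+W$ by definition.

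The one genuine gap as written is your ``at most $m$ bits'' slack $I(\rvec{q}_{t};\rvec{d}_{t})\le m$: the appeal to $\Delta$-periodicity of the conditional PMF is not an argument, and crude substitutes (e.g.\ a lattice maximum-entropy bound on $H(\rvec{q}_{t})$ via a variance estimate) do not recover the stated constant when $\mathrm{Cov}(C\rvec{e}_{t})$ is small relative to $\Delta^{2}$. The claim is nevertheless true, and the clean fix is one line: since $\rvec{d}_{t}\perp C\rvec{e}_{t}$, the chain rule gives $I(\rvec{q}_{t};\rvec{d}_{t})\le I(\rvec{q}_{t},C\rvec{e}_{t};\rvec{d}_{t})=I(\rvec{q}_{t};\rvec{d}_{t}\mid C\rvec{e}_{t})\le H(\rvec{q}_{t}\mid C\rvec{e}_{t})\le m$, because conditioned on $C\rvec{e}_{t}$ each coordinate of $\rvec{q}_{t}=Q_{\Delta}(C\rvec{e}_{t}+\rvec{d}_{t})$ ranges over a width-$\Delta$ interval that meets at most two quantizer cells, hence takes at most two values. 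With that substitution (and dropping the unnecessary tightness claim on the second matrix constraint), your outline matches the substance of the results the paper cites.
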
 The first statement follows from \cite[Lemma 1]{tanakaISIT} and the system model and the subsequent are via \cite[Theorem IV.5]{ourJSAIT}. 

 The TI prefix constraint can be satisfied by encoding each $\rvec{q}_{t}$ with the same lossless prefix code at all time. Let $\rvec{q}$ be a discrete random variable on $\Delta\mathbb{Z}^{m}$ (the same alphabet as $\rvec{q}_{t}$). Assume the support of $\rvec{q}$ is such that $\mathbb{P}[\rvec{q}=k]=0$ only if $\mathbb{P}[\rvec{q}_{t}=k]=0$. Define $\overline{F}_{\rvec{q}}(q) = \mathbb{P}[\rvec{q}<q]+\frac{\mathbb{P}[\rvec{q}=q]}{2}$, and define the encoding function $C^{\mathrm{F}}_{\rvec{q}}(q) :\Delta\mathbb{Z}^{m}\rightarrow \{0,1\}^{*}$ such that $    C^{\mathrm{F}}_{\rvec{q}}(q)$ is the binary expansion of $\overline{F}_{\rvec{q}}(q)$ truncated to $\lceil-\log_{2}(\mathbb{P}_{\rvec{q}}[q])\rceil+1$ bits. This is a standard, lossless, prefix-free Shannon-Fano-Elias code for the random variable $\rvec{q}$ \cite{elemIT}. If we use $C^{\mathrm{F}}_{\rvec{q}}$ to encode $\rvec{q}_{t}$  (e.g.   $\rvec{a}_{t}=C^{\mathrm{F}}_{\rvec{q}}(\rvec{q}_{t})$) the codeword length satisfies
\begin{align}\label{eq:instantcwl}
    \mathbb{E}[\ell(\rvec{a}_{t})]\le  2+ H(\rvec{q}_{t})+D_{\mathrm{KL}}(\rvec{q}_{t}||\rvec{q}). 
\end{align} Note that if for some $k$, $\mathbb{P}[\rvec{q}=k]=0$ while $\mathbb{P}[\rvec{q}_{t}=k]>0$, then $D_{\mathrm{KL}}(\rvec{q}_{t}||\rvec{q})=\infty$. The main result of this paper is the following lemma, proved in Section \ref{sec:ergopfs}. 
\begin{lemma}\label{lemm:summaryergo}
There exists a random variable $\rvec{q}$ such that $D_{\mathrm{KL}}(\rvec{q}_{t}||\rvec{q})$ is finite for all $t$ and $\lim_{t\rightarrow \infty}D_{\mathrm{KL}}(\rvec{q}_{t}||\rvec{q})=0$.
\end{lemma}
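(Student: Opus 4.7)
The plan is to take $\rvec{q}$ to be the dithered quantization of the steady-state Kalman filter prediction error. Substituting the decoder update \eqref{eq:askfup} and the plant dynamics into \eqref{eq:kferrordef} yields the linear Markov recursion
\[
\rvec{e}_{t+1} = F\rvec{e}_t + \rvec{n}_t,\quad F := A(I - JC),\quad \rvec{n}_t := \rvec{w}_t - AJ\rvec{v}_t.
\]
Since $J$ and $\hat{P}_+$ are the stabilizing Kalman gain and prediction covariance for the effective measurement model $\rvec{y} = C\rvec{x}+\rvec{v}$, $F$ is Schur stable. The innovation $\rvec{n}_t$ is i.i.d.\ with a density equal to the convolution of a non-degenerate Gaussian and a compactly supported uniform---smooth, log-concave, and strictly positive on $\mathbb{R}^m$. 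The chain $\{\rvec{e}_t\}$ therefore admits a unique stationary distribution $\mu$, and I take $\rvec{e}_\infty\sim\mu$ independent of a fresh dither $\rvec{d}$ and set $\rvec{q} := Q_\Delta(C\rvec{e}_\infty + \rvec{d})$.

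For pointwise convergence of the PMFs $p_t(k) := \mathbb{P}[\rvec{q}_t=k]$ to $p_\infty(k) := \mathbb{P}[\rvec{q}=k]$, I would extend the innovations to $\{\rvec{n}_s\}_{s\in\mathbb{Z}}$ and introduce the stationary trajectory $\tilde{\rvec{e}}_t := \sum_{s=-\infty}^{t-1} F^{t-1-s}\rvec{n}_s$. A direct computation gives $\rvec{e}_t - \tilde{\rvec{e}}_t = F^t(\rvec{e}_0 - \tilde{\rvec{e}}_0)$, which vanishes in $L^2$ at a geometric rate since $F$ is Schur and $\rvec{e}_0,\tilde{\rvec{e}}_0$ have finite variance. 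Sharing the dither, $Q_\Delta(C\tilde{\rvec{e}}_t + \rvec{d}_t)\sim\rvec{q}$, and it differs from $\rvec{q}_t$ only when the vanishing shift $CF^t(\rvec{e}_0-\tilde{\rvec{e}}_0)$ pushes $C\tilde{\rvec{e}}_t+\rvec{d}_t$ across a quantization cell boundary. Since conditional on $\tilde{\rvec{e}}_t$ the dither makes $C\tilde{\rvec{e}}_t+\rvec{d}_t$ uniform on an axis-aligned cube, a per-coordinate union bound yields
\[
\mathbb{P}[\rvec{q}_t \neq Q_\Delta(C\tilde{\rvec{e}}_t+\rvec{d}_t)] \le \mathbb{E}[\min(\|CF^t(\rvec{e}_0-\tilde{\rvec{e}}_0)\|_1/\Delta,\,1)] \to 0,
\]
so $p_t(k)\to p_\infty(k)$ for every $k\in\Delta\mathbb{Z}^m$.

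To promote this to KL convergence and secure finiteness of $D_{\mathrm{KL}}(\rvec{q}_t\|\rvec{q})$ at each $t$, I would establish two-sided Gaussian envelopes $c_1\exp(-c_2\|k\|^2) \le p_t(k),\,p_\infty(k) \le C_1\exp(-C_2\|k\|^2)$ valid \emph{uniformly} in $t$. The upper envelope follows from uniform-in-$t$ sub-Gaussianity of $\rvec{e}_t$ (a consequence of stability of $F$ and sub-Gaussianity of $\rvec{n}_t$), splitting $\rvec{e}_t$ into its Gaussian and bounded parts and bounding the convolution tail. The lower envelope uses $\rvec{e}_t = \rvec{a}_t + \rvec{w}_{t-1}$ with $\rvec{a}_t \perp \rvec{w}_{t-1}\sim\mathcal{N}(0,W)$ and the elementary bound $\phi_W(y-z) \ge c_0\exp(-y^{\mathrm{T}}W^{-1}y - z^{\mathrm{T}}W^{-1}z)$; averaging over $\rvec{a}_t$ yields $f_{\rvec{e}_t}(y) \ge c_1\exp(-c_2\|y\|^2)$ with constants depending only on $\lambda_{\min}(W)$ and $\sup_t\mathbb{E}\|\rvec{a}_t\|^2<\infty$. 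Integrating the smoothed density $f_{C\rvec{e}_t+\rvec{d}}$ over a quantization cell transfers both envelopes to $p_t$. With these uniform bounds, $|p_t(k)\log(p_t(k)/p_\infty(k))|$ is dominated by the summable envelope $C_1\exp(-C_2\|k\|^2)(B_1+B_2\|k\|^2)$, and dominated convergence combined with the pointwise convergence above delivers both finiteness at each $t$ and $D_{\mathrm{KL}}(\rvec{q}_t\|\rvec{q})\to 0$. The main technical obstacle is the uniform Gaussian \emph{lower} envelope on $p_t$: without a uniform tail lower bound the entropy-like summand $p_t(k)\log(1/p_\infty(k))$ lacks a uniform dominator, and KL convergence cannot be extracted from pointwise PMF convergence alone.
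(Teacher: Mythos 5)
Your choice of limit variable is the same as the paper's ($\rvec{q}$ is the dithered quantization of a stationary prediction error with an independent fresh dither), but your route to KL convergence is genuinely different. The paper treats $(\rvec{e}_t,\rvec{d}_t)$ as a Markov chain and gets existence of an invariant density and weak convergence from general state-space results (uniform positivity of the $n$-step Gaussian kernels, absence of weakly transient sets, $\lambda$-irreducibility and aperiodicity); it then never touches pointwise PMF convergence at all: it bounds $D_{\mathrm{KL}}(\rvec{q}_t\|\rvec{q})\le D_{\mathrm{KL}}(\rvec{e}_t\|\rvec{e})$ by the data-processing inequality (the map $e\mapsto Q_\Delta(Ce+\rvec{d})$ with independent uniform dither is the same stochastic kernel applied to both arguments), and a second DPI plus ``conditioning increases divergence'' reduces everything to an explicit Gaussian--Gaussian KL divergence that vanishes because $R^t\to 0$. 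Your backward-coupling argument (two-sided stationary trajectory, geometric contraction of the difference, dither boundary-crossing bound) is a clean, more elementary substitute for the paper's ergodic-theoretic lemmas and even yields a quantitative total-variation rate; the price is that upgrading TV/pointwise convergence to KL convergence forces you to build the uniform envelope machinery that the paper's DPI trick sidesteps entirely. Both arguments rest on the same enabling fact: the dithered quantization error $\rvec{v}_t$ is IID uniform and independent of the past, which is exactly what makes $\{\rvec{e}_t\}$ an IID-driven stable linear recursion (your $F=A(I-JC)$ is the paper's $R=A-LC$).

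The one step that does not hold as you state it is the two-sided Gaussian envelope claimed on all of $\Delta\mathbb{Z}^m$. The paper only requires $\hat{P}^{-1}-\hat{P}_{+}^{-1}=C^{\mathrm{T}}C\,\frac{12}{\Delta^{2}}$, so $C$ may be rank-deficient; in that case $C\rvec{e}_{\infty}+\rvec{d}$ is supported on the slab $\mathrm{range}(C)+\mathcal{B}^{m}(\Delta)$, $p_\infty(k)=0$ for cells away from it, and the lower envelope $p_\infty(k)\ge c_1 e^{-c_2\lVert k\rVert^2}$ is false. The repair is straightforward: for $t\ge 1$ both $\rvec{e}_t$ and $\rvec{e}_{\infty}$ have everywhere-positive densities (since $W\succ 0$), so $p_t$ and $p_\infty$ share the same support, and your lower bound $f_{\rvec{e}_{\infty}}(y)\ge c_1e^{-c_2\lVert y\rVert^2}$, pushed through $C$ and the dither, still gives a Gaussian-type lower bound on $p_\infty$ over that support, which is all the dominated-convergence argument needs. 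Relatedly, you only need the uniform-in-$t$ upper envelope on $p_t$ and the lower envelope on $p_\infty$; the lower envelope on $p_t$ is unnecessary (and at $t=0$ it can genuinely fail, since the paper allows singular $X_0$, e.g. a deterministic initial state). With these adjustments your plan goes through, though it remains considerably heavier at the KL step than the paper's two-line DPI reduction to a closed-form Gaussian divergence.
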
 This lemma allows us to bound the bitrate for when a TI prefix-free code is used in Fig. \ref{fig:achievability}. 
\begin{theorem}
Let $\eta =  \left(1+\frac{1}{2}\log_{2}\left(\frac{2\pi e}{12}\right)\right)$. Setting $\rvec{a}_{t}=C^{\mathrm{F}}_{\rvec{q}}(\rvec{q}_{t})$ for all $t$ ensures that the system in Fig. \ref{fig:achievability} attains a control cost less than $\gamma$, satisfies the prefix constraint in Section \ref{sec:systemmodel}, and attains a communication cost that satisfies
\begin{IEEEeqnarray}{rCl}
    \lim\sup_{T\rightarrow\infty}\frac{1}{T}\sum\limits_{i=0}^{T-1}\mathbb{E}[\ell(\rvec{a}_{t})]&\le&   \mathcal{R}(\gamma)+2+m\eta \label{eq:applycesaero}
\end{IEEEeqnarray}\end{theorem}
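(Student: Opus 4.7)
The plan is to bolt together the three claims of the theorem from Lemma \ref{lemm:summarylemma}, Lemma \ref{lemm:summaryergo}, and the Shannon-Fano-Elias inequality (\ref{eq:instantcwl}). The control-cost claim is immediate from statement 3 of Lemma \ref{lemm:summarylemma}: the dynamics of Fig. \ref{fig:achievability} depend only on the fact that $\rvec{q}_t$ is recovered losslessly at the decoder, and since Lemma \ref{lemm:summaryergo} guarantees that the support of $\rvec{q}$ contains the support of every $\rvec{q}_t$, the code $C^{\mathrm{F}}_{\rvec{q}}$ is indeed lossless on each $\rvec{q}_t$. For the TI prefix claim I would simply note that every $\rvec{a}_t$ is produced by the single, time-invariant map $C^{\mathrm{F}}_{\rvec{q}}$, which by the standard Shannon-Fano-Elias construction assigns prefix-free binary strings to distinct elements of $\Delta\mathbb{Z}^{m}$; hence $\bigcup_{t\ge 0}\mathrm{supp}(\rvec{a}_t)\subseteq \mathrm{range}(C^{\mathrm{F}}_{\rvec{q}})$ is prefix-free, which is exactly the TI constraint of Section \ref{sec:systemmodel}.

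For the bitrate bound I would apply (\ref{eq:instantcwl}) pointwise in $t$ under the assignment $\rvec{a}_t = C^{\mathrm{F}}_{\rvec{q}}(\rvec{q}_t)$, average over $t\in\{0,\dots,T-1\}$, and take $\limsup_{T\to\infty}$ to obtain
\begin{equation*}
\limsup_{T\to\infty}\frac{1}{T}\sum_{t=0}^{T-1}\mathbb{E}[\ell(\rvec{a}_t)] \le 2 + \limsup_{T\to\infty}\frac{1}{T}\sum_{t=0}^{T-1}H(\rvec{q}_t) + \limsup_{T\to\infty}\frac{1}{T}\sum_{t=0}^{T-1}D_{\mathrm{KL}}(\rvec{q}_t||\rvec{q}).
\end{equation*}
The key observation is the elementary \Cesaro fact: if a nonnegative sequence $\{a_t\}$ satisfies $\limsup_t a_t \le L$, then $\limsup_T \frac{1}{T}\sum_{t=0}^{T-1}a_t \le L$, and if $a_t\to 0$ then $\frac{1}{T}\sum_{t=0}^{T-1}a_t\to 0$. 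Applying this to $a_t = H(\rvec{q}_t)$ with $L = \mathcal{R}(\gamma)+m\eta$ (statement 2 of Lemma \ref{lemm:summarylemma}) and to $a_t = D_{\mathrm{KL}}(\rvec{q}_t||\rvec{q})$ with $L=0$ (Lemma \ref{lemm:summaryergo}) collapses the right side to $\mathcal{R}(\gamma)+2+m\eta$, which is (\ref{eq:applycesaero}).

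The only step that requires a touch of care is the \Cesaro argument, since Lemma \ref{lemm:summaryergo} only asserts convergence of $D_{\mathrm{KL}}(\rvec{q}_t||\rvec{q})$ to $0$ and finiteness for each $t$, not uniform boundedness; however nonnegativity of entropy and of the KL divergence together with the stated pointwise limits is all the standard \Cesaro lemma needs. No genuine obstacle arises here because all the analytical heavy lifting — ergodicity of the quantizer output and the existence of the limiting law $\rvec{q}$ — has been isolated in Lemma \ref{lemm:summaryergo}, which is proved separately in Section \ref{sec:ergopfs}; the theorem itself is essentially a bookkeeping step that combines the per-symbol Shannon-Fano-Elias overhead of $2$ bits, the asymptotic entropy bound $\mathcal{R}(\gamma)+m\eta$, and the vanishing KL mismatch penalty.
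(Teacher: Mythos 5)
Your proposal is correct and follows essentially the same route as the paper's own proof: the time-invariant Shannon--Fano--Elias map gives the prefix property, finiteness of $D_{\mathrm{KL}}(\rvec{q}_{t}\|\rvec{q})$ gives losslessness (hence the control cost via Lemma \ref{lemm:summarylemma}), and the rate bound comes from applying (\ref{eq:instantcwl}) termwise and taking a \Cesaro limit using Lemmas \ref{lemm:summarylemma} and \ref{lemm:summaryergo}. If anything, you spell out the \Cesaro bookkeeping (handling the finitely many non-asymptotic terms) more explicitly than the paper does.
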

\begin{proof}
$C^{\mathrm{F}}_{\rvec{q}}$ is a prefix free code fixed over all $t$. Furthermore, since $D_{\mathrm{KL}}(\rvec{q}_{t}||\rvec{q})$ is finite for all $t$, it is lossless on the support of \{$\rvec{q}_{t}\}$. The bound on codeword length (\ref{eq:applycesaero}) follows via applying  (\ref{eq:instantcwl}) to each term on the left-hand side and then using the \Cesaro mean (cf. Lemma \ref{lemm:summarylemma}). 
\end{proof} In Section \ref{sec:ergopfs}, we use ergodic theory to prove Lemma \ref{lemm:summaryergo}. 
\section{Proof of Lemma \ref{lemm:summaryergo}}\label{sec:ergopfs} 
We proceed as in \cite{ourJSAIT}  by analyzing the long-term behavior of the Kalman innovation and dither signals, e.g. $(\rvec{e}_{t},\rvec{d}_{t})$. At every $t$, the quantization error $\rvec{v}_{t}$ is a measurable function of $(\rvec{e}_{t},\rvec{d}_{t})$ given by $\rvec{v}_{t} = \rvec{q}_{t}-\dm{C}\rvec{e}_{t}$, or, equivalently, $\rvec{v}_{t} = Q_{\Delta}(\dm{C}\rvec{e}_{t}+\rvec{d}_{t})-\rvec{d}_{t}-\dm{C}\rvec{e}_{t}$. Let $L=AJ$ and $\dm{R}=\dm{A}-\dm{L}\dm{C}$. The following recursion for the $\rvec{e}_{t}$ (cf. (\ref{eq:kferrordef})) can be derived
\begin{IEEEeqnarray}{rCl}\label{eq:recursion}
    \rvec{e}_{t+1} &=& R \rvec{e}_{t} -L\rvec{v}_{t}+\rvec{w}_{t}\\ &=& M( \rvec{e}_{t},\rvec{d}_{t})+\rvec{w}_{t}. 
\end{IEEEeqnarray}
By construction of $L$, 
$(A-LC)$ is stable, e.g. its eigenvalues lie strictly within the complex unit circle \cite{RDE_convergence}. Define the state space $\mathbb{D}^{m} = \mathbb{R}^{m}\times \mathcal{B}^{m}(\Delta)$. For $\dvec{r},\dvec{\mu}\in\mathbb{R}^{m}$, $\Sigma \in\mathbb{S}_{+}^{m\times m}$, denote the PDF of a multivariate Gaussian random variable with mean $\dvec{\mu}$ and covariance $\Sigma$  evaluated at $\dvec{r}$ by $N(\dvec{r};\dvec{\mu},\Sigma)$, i.e. define the function  $N(\dvec{r};\dvec{\mu},\Sigma):\mathbb{R}^{m}\times \mathbb{R}^{m}\times  \mathbb{S}_{+}^{m}\rightarrow \mathbb{R}$ via 
 $N(\dvec{r};\dvec{\mu},\Sigma)=\frac{1}{\sqrt{(2\pi)^{m}\det{\Sigma}}}e^{-\frac{1}{2}(\dvec{r}-\dvec{\mu})^{\tp}\Sigma^{-1}(\dvec{r}-\dvec{\mu})}$. Define 
$M\colon (x,y)\in\mathbb{D}^{m}\rightarrow\mathbb{R}^{m}$ via
    $M(\dvec{x},\dvec{y}) = R\dvec{x}-L(Q_{\Delta}(C\dvec{x}+\dvec{y})-\dvec{y}-C\dvec{x})$. By definition, the marginal PDF of the dither is $f_{\rvec{d}_{t+1}}(\dvec{d}) = \frac{1}{\Delta^m}1_{\dvec{d}\in \mathcal{B}^{m}(\Delta)}$ for all $t$. Since $\rvec{w}_{t}\perp (\rvec{e}^{t},\rvec{d}^{t})$, and $\rvec{d}_{t+1}\perp (\rvec{e}^{t+1}, \rvec{w}^{t}$),  via (\ref{eq:recursion}) we have that $\{\rvec{e}_{t},\rvec{d}_{t}\}$ is a time-homogeneous first order Markov chain. 
    
    The PDFs of $\{\rvec{e}_{t},\rvec{d}_{t}\}$ factorize via $ f_{\rvec{e}_{t+1},\rvec{d}_{t+1}|\rvec{e}^{t},\rvec{d}^{t}}=     f_{\rvec{d}_{t+1}}f_{\rvec{e}_{t+1}|\rvec{e}_{t},\rvec{d}_{t}}$.
Via (\ref{eq:recursion}),
\begin{IEEEeqnarray}{rCl}
f_{\rvec{e}_{t+1}|\rvec{e}_{t},\rvec{d}_{t}}(\dvec{e}|\dvec{e}_{p},\dvec{d}_{p})  &=& N(\dvec{e};M(\dvec{e}_{p},\dvec{d}_{p}),W)\label{eq:normalonestep}\\f_{\rvec{e}_{t+1}|\rvec{e}_{t},\rvec{v}_{t}}(\dvec{e}|\dvec{e}_{p},\dvec{v}_{p})  &=& N(\dvec{e};R\dvec{e}_{p}-L\dvec{v}_{p},W).\label{eq:normalonestep2}
\end{IEEEeqnarray} 
Let $f_{t+1|t} = f_{\rvec{e}_{t+1},\rvec{d}_{t+1}|\rvec{e}_{t},\rvec{d}_{t}}$, and likewise $f_{t+n|t} = f_{\rvec{e}_{t+n},\rvec{d}_{t+n}|\rvec{e}_{t},\rvec{d}_{t}}$. From the foregoing, we have
\begin{align}\label{eq:transisionkernel}
    f_{t+1|t}(\dvec{e},\dvec{d}|\dvec{e}_{p},\dvec{d}_{p}) = \frac{1_{\dvec{d}\in \mathcal{B}^{m}(\Delta) }N(\dvec{e};M(\dvec{e}_{p},\dvec{d}_{p}),W)}{\Delta^m}
\end{align}
Applying the Chapman-Komogorov equations to (\ref{eq:transisionkernel}), it can be seen that the $n$-step transition kernels satisfy
\begin{align}\label{eq:nsteptrans}
f_{t+n|t}(\dvec{e},\dvec{d}|\dvec{e}_{p},\dvec{d}_{p}) =f_{\rvec{e}_{t+n}|\rvec{e}_{t},\rvec{d}_{t}}(\dvec{e}|\dvec{e}_{p},\dvec{d}_{p})\frac{1_{\dvec{d}\in \mathcal{B}^{m}(\Delta) }}{\Delta^m}. 
\end{align} We now state the generalization of \cite[Lemma A.3]{ourJSAIT}.
\begin{lemma}\label{lemm:normalPDF}
Let $\Sigma_n = \sum\limits_{i=0}^{n-1}R^{i}W (R^{\tp})^i $ and 
\begin{align}\nonumber
    \dvec{\mu}_n(\dvec{e}_0,\dvec{d}_0,\dvec{v}_1^{n-1})= R^{n-1}M(\dvec{e}_0,\dvec{d}_0)-\sum\limits_{i=0}^{n-2}R^iL\dvec{v}_{n-1-i}.
\end{align}  
For all $n\ge 1$ we have
\begin{multline}\label{eq:normalStepFormula}
    f_{\rvec{e}_{n}|\rvec{e}_{0},\rvec{d}_{0},\rvec{v}_1^{n-1}}(\dvec{e}_{n}|\dvec{e}_{0},\dvec{d}_{0},\dvec{v}_1^{n-1}) =\\ N(\dvec{e}_{n}; \dvec{\mu}_n(\dvec{e}_{0},\dvec{d}_{0},\dvec{v}_1^{n-1}), \Sigma_{n}).
\end{multline} 
\end{lemma}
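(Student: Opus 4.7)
The plan is to unroll the recursion (\ref{eq:recursion}) and then isolate the residual randomness in $\rvec{e}_n$ conditional on $(\rvec{e}_0,\rvec{d}_0,\rvec{v}_1^{n-1})$. Iterating (\ref{eq:recursion}), with the identification $\rvec{e}_1 = M(\rvec{e}_0,\rvec{d}_0)+\rvec{w}_0$, yields
\[
\rvec{e}_n = R^{n-1}M(\rvec{e}_0,\rvec{d}_0) - \sum_{i=0}^{n-2} R^{i} L \rvec{v}_{n-1-i} + \tilde{\rvec{w}}_n,
\]
where $\tilde{\rvec{w}}_n = \sum_{i=0}^{n-1} R^{i}\rvec{w}_{n-1-i}$. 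The part that is deterministic given the conditioning variables is exactly $\dvec{\mu}_n(\rvec{e}_0,\rvec{d}_0,\rvec{v}_1^{n-1})$, and the covariance of the independent-Gaussian sum $\tilde{\rvec{w}}_n$ telescopes to $\Sigma_n$. Hence the lemma reduces to showing that $\tilde{\rvec{w}}_n$ retains its unconditional $\mathcal{N}(0,\Sigma_n)$ law even after conditioning on $(\rvec{e}_0,\rvec{d}_0,\rvec{v}_1^{n-1})$.

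To establish this conditional independence, independence of $\rvec{w}_0^{n-1}$ from $\rvec{e}_0=\rvec{x}_0$ and from $\rvec{d}_0$ is immediate from the system-model factorization (\ref{eq:ditherFactorization}). The subtler piece is independence from $\rvec{v}_1^{n-1}$: naively each $\rvec{v}_i$ is a function of $\rvec{e}_i$, which depends on $\rvec{w}^{i-1}$, so one would expect correlation. The saving grace is part~1 of Lemma \ref{lemm:summarylemma}, the subtractive-dither property inherited from \cite[Lemma 1]{tanakaISIT}, which guarantees that $\rvec{v}_t$ is jointly independent of $(\rvec{e}^{t},\rvec{v}^{t-1},\rvec{w}^{t},\rvec{x}_0)$ for every $t$. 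Applying this at $t=1,\ldots,n-1$, and noting that $\rvec{v}_t\perp\rvec{d}_0$ since the dithers are IID and $\rvec{v}_t$ is marginally uniform, yields the required joint independence of $(\rvec{e}_0,\rvec{d}_0,\rvec{v}_1^{n-1})$ from $\rvec{w}_0^{n-1}$, and hence from $\tilde{\rvec{w}}_n$, giving the density (\ref{eq:normalStepFormula}).

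I expect the main obstacle to be rigorously combining the pairwise independence statements of Lemma \ref{lemm:summarylemma} into the joint independence needed above. A safer write-up is induction on $n$: the base case $n=1$ is exactly (\ref{eq:normalonestep}); for the inductive step one uses $\rvec{e}_{n+1}=R\rvec{e}_n - L\rvec{v}_n + \rvec{w}_n$ together with Lemma \ref{lemm:summarylemma} to strip $\rvec{v}_n$ from the conditioning of $\rvec{e}_n$ (and to treat $\rvec{w}_n$ as fresh $\mathcal{N}(0,W)$ noise independent of the history), marginalizes $\rvec{e}_n$ via the tower property, and convolves the two resulting Gaussian densities. The recursions $\dvec{\mu}_{n+1} = R\dvec{\mu}_n - L\rvec{v}_n$ and $\Sigma_{n+1} = W + R\Sigma_n R^{\tp}$ are routine to verify from the definitions in the lemma and close the induction.
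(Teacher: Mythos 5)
Your proposal is correct and, in the form you ultimately recommend, coincides with the paper's own proof: the paper proceeds by induction on $n$, with base case (\ref{eq:normalonestep}), uses the dither independence of Lemma~\ref{lemm:summarylemma} (through a Bayes-rule step) to strip $\rvec{v}_{k-1}$ from the conditioning of $\rvec{e}_{k-1}$, marginalizes over $\rvec{e}_{k-1}$, and convolves the two Gaussian densities to get exactly your recursions $\dvec{\mu}_{k}=R\dvec{\mu}_{k-1}-L\dvec{v}_{k-1}$ and $\Sigma_{k}=R\Sigma_{k-1}R^{\tp}+W$. Your direct unrolling is fine as motivation, and you correctly identify that upgrading the lemma's independence statements to the joint independence it would require is the delicate point, which is precisely why the inductive write-up you describe is the right (and the paper's) route.
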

\begin{proof}
The proof follows by induction on $n$. By (\ref{eq:normalonestep}), (\ref{eq:normalStepFormula}) holds for $n=1$. Assume that the relation (\ref{eq:normalStepFormula}) holds for some $n=k-1$. To avoid clutter, when obvious we suppress the arguments to PMFs, writing, e.g. $f_{\rvec{e}_{k}|\rvec{e}_0,\rvec{d}_0,\rvec{v}_{1}^{k-1}}$ in place of $f_{\rvec{e}_{k}|\rvec{e}_0,\rvec{d}_0,\rvec{v}_{1}^{k-1}}(\dvec{e}_{k}|\dvec{e}_0,\dvec{d}_0,\dvec{v}_{1}^{k-1})$. Via Bayes' Theorem, $f_{\rvec{e}_{k}|\rvec{e}_0,\rvec{d}_0,\rvec{v}_{1}^{k-1}} = \frac{f_{\rvec{e}_{k},\rvec{v}_{k-1}|\rvec{e}_0,\rvec{d}_0,\rvec{v}_{1}^{k-2}}}{f_{\rvec{v}_{k-1}|\rvec{e}_0,\rvec{d}_0,\rvec{v}_{1}^{k-2}}}$.
Since $\rvec{v}_{k-1}$ is a measurable function of  $\rvec{e}_{k-1}$ and  $\rvec{d}_{k-1}$, $\rvec{v}_{k-1}$  is conditionally independent of $(\rvec{v}_{1}^{k-2}, \rvec{e}_{0}, \rvec{d}_0 )$ given $\rvec{e}_{k-1}$. By Lemma \ref{lemm:summarylemma},  $\rvec{v}_{k-1}$ is (pairwise) independent of $\rvec{e}_{k-1}$. This further implies that $\rvec{v}_{k-1}$ is independent of $(\rvec{v}_{1}^{k-2}, \rvec{e}_{0}, \rvec{d}_0 )$. Thus we have 
\begin{IEEEeqnarray}{rCl}\label{eq:firstSubErgo}
    f_{\rvec{v}_{k-1}} &=& f_{\rvec{v}_{k-1}|\rvec{e}_{k-1},\rvec{e}_0,\rvec{d}_0,\rvec{v}_{1}^{k-2}}\\&=& f_{\rvec{v}_{k-1}|\rvec{e}_0,\rvec{d}_0,\rvec{v}_{1}^{k-2}}\label{eq:firstSubErgoPrime}
\end{IEEEeqnarray}
Using (\ref{eq:firstSubErgo}), it can be seen that  $      f_{\rvec{e}_{k},\rvec{v}_{k-1}|\rvec{e}_0,\rvec{d}_0,\rvec{v}_{1}^{k-2}} = f_{\rvec{v}_{k-1}}\int_{\mathbb{R}}f_{\rvec{e}_{k}|\rvec{e}_{k-1},\rvec{e}_0,\rvec{d}_0,\rvec{v}_{1}^{k-1}}f_{\rvec{e}_{k-1}|\rvec{e}_{0},\rvec{d}_0,\rvec{v}_{1}^{k-2}}d\dvec{e}_{k-1}$. Thus, 
\begin{multline}\label{eq:gconv}
        f_{\rvec{e}_{k}|\rvec{e}_0,\rvec{d}_0,\rvec{v}_{1}^{k-1}} =\\ \int_{\mathbb{R}}f_{\rvec{e}_{k}|\rvec{e}_{k-1},\rvec{e}_0,\rvec{d}_0,\rvec{v}_{1}^{k-1}}f_{\rvec{e}_{k-1}|\rvec{e}_{0},\rvec{d}_0,\rvec{v}_{1}^{k-2}}d\dvec{e}_{k-1}.
\end{multline}  As $\rvec{w}_{k-1}\perp (\rvec{d}_{0}^{k-1},\rvec{e}_{0}^{k-1},\rvec{w}_{0}^{t-1},\rvec{x}_{0})$, we have, via (\ref{eq:recursion}) that  $\rvec{e}_{k}$ is conditionally independent of $(\rvec{e}_0,\rvec{d}_{0},\rvec{v}_{0}^{k-2}) $ given $(\rvec{e}_{k-1},\rvec{v}_{k-1})$. Thus $f_{\rvec{e}_{k}|\rvec{e}_{k-1},\rvec{e}_0,\rvec{d}_0,\rvec{v}_{1}^{k-1}}=f_{\rvec{e}_{k}|\rvec{e}_{k-1},\rvec{v}_{k-1}}$. Using (\ref{eq:normalonestep2}) and assuming (\ref{eq:normalStepFormula}) holds for $n=k-1$, the integration in (\ref{eq:gconv}) is given by $f_{\rvec{e}_{k}|\rvec{e}_0,\rvec{d}_0,\rvec{v}_{1}^{k-1}} = \int_{\mathbb{R}} N(\dvec{e}_{k};R\dvec{e}_{k-1}-L\dvec{v}_{k-1},W)N(\dvec{e}_{t-1};\dvec{\mu}_{k-1},\Sigma_{k-1}) d\dvec{e}_{k-1}$. This is the convolution of two multivariate Gaussian PDFs.

Computing the convolution, we have $f_{\rvec{e}_{k}|\rvec{e}_0,\rvec{d}_0,\rvec{v}_{1}^{k-1}} = N(\dvec{e}_{k}; R\mu_{k-1}-L\dvec{v}_{k-1}, R\Sigma_{k-1}R^{\tp}+W)$, e.g. given $(\rvec{e}_0,\rvec{d}_0,\rvec{v}_{1}^{k-1})=(\dvec{e}_0,\dvec{d}_0,\dvec{v}_{1}^{k-1})$,  $\rvec{e}_{k}$ is Gaussian with mean $R\dvec{\mu}_{k-1}-L\dvec{v}_{k-1}$ and covariance $R\Sigma_{k-1}R^{\tp}+W$. As $\mu_{k} = R\mu_{k-1}-L\dvec{v}_{k-1}$ and  $\Sigma_{k}=R\Sigma_{k-1}R^{\tp}+W$ we have the proof. 
\end{proof} The next lemmas describe properties of $\{\Sigma_{n}\}$ and the sequence of functions $\mu_{n}(e_0,d_0,v_{1}^{n-2}):\mathbb{R}^{m}\times (\mathcal{B}^{m}(\Delta))^{n-1}\rightarrow \mathbb{R}^{m}$. We recall a result from system theory. 
\begin{lemma}[Gelfand's Theorem and a Corollary  \cite{dullRobust}]\label{lemm:gelfand} Let $Q\in\mathbb{R}^{m\times m}$ and denote the spectral radius of $Q$ (e.g. the largest of the absolute values of $Q$'s eigenvalues) by $\rho_{\max}(Q)$. Gelfand's Theorem states that $\lim_{n\rightarrow \infty}\left(\lVert Q^{n} \rVert_2\right)^{\frac{1}{n}}  = \rho_{\max}(Q)$. Let $\gamma=(\rho_{\max}(Q)+1)/2$. If $\rho_{\max}(Q)<1$ then $\gamma<1$ and by Gelfand's Theorem $\exists$ $i$ such that $\forall$ $j\ge i$ $\lVert Q^{j} \rVert_2 \le \gamma^{j}$. 
\end{lemma}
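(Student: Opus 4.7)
The plan is to take Gelfand's Theorem itself as classical (it is a standard fact in the spectral theory of matrices, and any proof would simply reproduce a textbook argument) and focus the work on the corollary, which is the piece that will actually be used downstream to control iterated powers of the stable closed-loop matrix $R = A - LC$. The corollary is essentially a one-line unpacking of the definition of the limit supplied by Gelfand's Theorem.

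First I would observe that when $\rho_{\max}(Q) < 1$, the midpoint $\gamma = (\rho_{\max}(Q)+1)/2$ lies strictly in the interval $(\rho_{\max}(Q),1)$, so in particular $\gamma < 1$ and $\epsilon := \gamma - \rho_{\max}(Q) > 0$. Next, I would invoke Gelfand's Theorem to write $\lim_{n\to\infty}(\lVert Q^{n}\rVert_{2})^{1/n} = \rho_{\max}(Q)$. By the definition of convergence applied to this particular $\epsilon$, there exists an index $i$ such that for every $j \ge i$ one has $(\lVert Q^{j}\rVert_{2})^{1/j} \le \rho_{\max}(Q) + \epsilon = \gamma$. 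Raising both sides to the $j$-th power yields the claimed geometric bound $\lVert Q^{j}\rVert_{2} \le \gamma^{j}$.

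There is no real obstacle here: once Gelfand's Theorem is granted, the corollary is immediate. The only design choice is the particular value of $\gamma$; the midpoint $(\rho_{\max}(Q)+1)/2$ is a convenient concrete choice, but any $\gamma \in (\rho_{\max}(Q),1)$ would do. The reason to package the statement in this form is that the subsequent ergodic arguments need an \emph{explicit} geometric decay rate for $\lVert R^{n}\rVert_{2}$, not merely the asymptotic root-rate statement, in order to bound the partial sums defining $\Sigma_{n}$ in Lemma \ref{lemm:normalPDF} and to establish contractive behavior of the Markov kernel governing $(\rvec{e}_{t},\rvec{d}_{t})$.
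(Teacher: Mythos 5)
Your proposal is correct and follows essentially the same route as the paper, which likewise takes Gelfand's Theorem as a cited classical fact and obtains the corollary directly from the definition of the limit: since $\gamma=(\rho_{\max}(Q)+1)/2\in(\rho_{\max}(Q),1)$, eventually $\left(\lVert Q^{j}\rVert_{2}\right)^{1/j}\le\gamma$, and raising to the $j$-th power gives $\lVert Q^{j}\rVert_{2}\le\gamma^{j}$. No gaps; the argument is exactly the one-line unpacking the paper intends.
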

\begin{lemma}\label{lemm:sigset}
We have $\Sigma_{n}\succeq W \succ 0_{m\times m}$. Furthermore, there exists a constant $c$ such that for all $n$, $\lVert \Sigma_{n} \rVert_2 \le c$.  
\end{lemma}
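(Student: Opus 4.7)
My proof plan splits the two claims.

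For the lower bound $\Sigma_n\succeq W\succ 0_{m\times m}$, I would simply isolate the $i=0$ summand: writing $\Sigma_n = W + \sum_{i=1}^{n-1} R^{i}W(R^{\tp})^{i}$, the tail is a sum of PSD matrices (since $R^{i}W(R^{\tp})^{i}\succeq 0$ whenever $W\succeq 0$), hence $\Sigma_n - W$ is PSD. The strict positivity of $W$ is already a standing assumption in Section~\ref{sec:systemmodel}.

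For the uniform upper bound on $\lVert\Sigma_n\rVert_2$, the key ingredient is the stability of $R=A-LC$ noted after (\ref{eq:recursion}): by construction $L=AJ$ where $J$ is the stationary Kalman gain, and standard results for the converged Riccati recursion (\cite{RDE_convergence}) give $\rho_{\max}(R)<1$. I would then invoke Lemma~\ref{lemm:gelfand} to obtain a constant $\gamma\in(0,1)$ and an index $i_0$ such that $\lVert R^{j}\rVert_2\le \gamma^{j}$ for all $j\ge i_0$. Using submultiplicativity and $\lVert R^{\tp}\rVert_2=\lVert R\rVert_2$, one has $\lVert R^{i}W(R^{\tp})^{i}\rVert_2\le \lVert R^{i}\rVert_2^{2}\,\lVert W\rVert_2$, so the triangle inequality yields
\begin{equation*}
\lVert\Sigma_n\rVert_2 \le \lVert W\rVert_2\sum_{i=0}^{n-1}\lVert R^{i}\rVert_2^{2}.
\end{equation*}
I would then split the sum at $i_0$: the first $i_0$ terms are a fixed finite quantity depending only on $R$ and $W$, and the tail is bounded by the convergent geometric series $\sum_{i\ge i_0}\gamma^{2i} = \gamma^{2i_0}/(1-\gamma^{2})$. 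Hence the partial sums are uniformly bounded, and one may take
\begin{equation*}
c \;=\; \lVert W\rVert_2\Bigl(\sum_{i=0}^{i_0-1}\lVert R^{i}\rVert_2^{2} \;+\; \frac{\gamma^{2i_0}}{1-\gamma^{2}}\Bigr).
\end{equation*}

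There is no real obstacle here: both claims are essentially bookkeeping once the stability of $R$ is combined with Gelfand's theorem. The only point that requires a little care is citing the appropriate Riccati convergence result to justify $\rho_{\max}(R)<1$ without rederiving it, since $(A,C)$ and the innovation covariance $I_m\Delta^{2}/12$ satisfy the detectability/controllability conditions implicit in the construction of $\hat P$ via (\ref{eq:threestageRDF}).
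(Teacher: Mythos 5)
Your proof is correct and follows essentially the same route as the paper: the lower bound comes from $W\succ 0$ plus the PSD tail terms, and the upper bound combines $\rho_{\max}(R)<1$ (via \cite{RDE_convergence}) with Lemma~\ref{lemm:gelfand}, the triangle inequality, and submultiplicativity to bound $\lVert\Sigma_n\rVert_2$ by the convergent series $\lVert W\rVert_2\sum_i\lVert R^i\rVert_2^2$. You merely make explicit the geometric-series bookkeeping (splitting at $i_0$) that the paper leaves implicit.
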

\begin{proof}
The first statement is immediate from the assumption that $W\succ 0_{m\times m}$ and the formula for $\Sigma_n$ in Lemma \ref{lemm:normalPDF}. Note that since $\Sigma_{n-1}\preceq\Sigma_{n}$, we have that $\{\lVert\Sigma_{n} \rVert_{2}\}$ is monotonically increasing. Recall that 
$\rho_{\max}(R) < 1$ \cite{RDE_convergence}. Using  Lemma \ref{lemm:gelfand}, it is seen that for some $c<\infty$,  $\lim_{r\rightarrow\infty}\sum\limits_{i=0}^{r}\lVert R^{i} \rVert_{2}^{2} \lVert W \rVert_{2} = c$. The triangle inequality and the submultiplicity of the matrix 2-norm gives $\lVert\Sigma_{n}\rVert_{2} \le  c$.
\end{proof}
\begin{lemma}\label{lemm:muset}
There exist constants $\alpha$ and $\beta$ such that for any $n$ and choice of $\dvec{v}_{1}^{n-2}\in  (\mathcal{B}^{m}(\Delta))^{n-2}$ we have  $\lVert \mu_{n}\left(\dvec{e}_0,\dvec{d}_0,\dvec{v}_{1}^{n-1}\right) \rVert_{2} \le \alpha\lVert M(e_0,d_0) \rVert_{2}+\beta$. 
\end{lemma}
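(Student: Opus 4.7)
The plan is to unfold the closed-form expression for $\mu_n$ given in Lemma \ref{lemm:normalPDF}, split it into the term driven by the initial condition $M(\dvec{e}_0,\dvec{d}_0)$ and the accumulated error-feedback term $\sum_{i=0}^{n-2}R^{i}L\dvec{v}_{n-1-i}$, then use submultiplicativity of $\lVert\cdot\rVert_{2}$ together with Gelfand's Theorem (Lemma \ref{lemm:gelfand}) to absorb each piece into a constant independent of $n$ and the choice of $\dvec{v}_{1}^{n-2}$.

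First I would apply the triangle inequality to the definition of $\mu_n$ to obtain
\begin{equation*}
\lVert\mu_{n}(\dvec{e}_0,\dvec{d}_0,\dvec{v}_1^{n-1})\rVert_2 \le \lVert R^{n-1}\rVert_2\,\lVert M(\dvec{e}_0,\dvec{d}_0)\rVert_2 + \lVert L\rVert_2 \sum_{i=0}^{n-2} \lVert R^{i}\rVert_{2}\,\lVert\dvec{v}_{n-1-i}\rVert_2.
\end{equation*}
Because $\dvec{v}_{k}\in\mathcal{B}^{m}(\Delta)$ by construction, $\lVert\dvec{v}_{k}\rVert_\infty\le\Delta/2$ and hence $\lVert\dvec{v}_{k}\rVert_2\le \tfrac{\Delta\sqrt{m}}{2}$ uniformly. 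So the second term is bounded by $\tfrac{\Delta\sqrt{m}}{2}\lVert L\rVert_2 \sum_{i=0}^{n-2}\lVert R^{i}\rVert_2$.

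Next I would invoke the stability of $R=A-LC$ (whose spectral radius is strictly less than $1$ by construction of the Kalman gain, as already recalled in the proof of Lemma \ref{lemm:sigset}) together with Lemma \ref{lemm:gelfand}. Gelfand's Theorem yields an index $i_{0}$ and a constant $\gamma\in(0,1)$ with $\lVert R^{j}\rVert_2\le\gamma^{j}$ for all $j\ge i_0$. Therefore $\sup_{j\ge 0}\lVert R^{j}\rVert_2 < \infty$ (the finitely many initial norms are finite, the tail is bounded by $\gamma^{i_0}$), which lets me choose $\alpha := \sup_{j\ge 0}\lVert R^{j}\rVert_2$. Similarly the geometric tail gives $\sum_{i=0}^{\infty}\lVert R^{i}\rVert_2 \le \sum_{i=0}^{i_0-1}\lVert R^{i}\rVert_2 + \tfrac{\gamma^{i_0}}{1-\gamma} < \infty$, so I can set $\beta := \tfrac{\Delta\sqrt{m}}{2}\lVert L\rVert_2 \sum_{i=0}^{\infty}\lVert R^{i}\rVert_2$.

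Combining these choices with the inequality above gives the desired uniform bound $\lVert\mu_n\rVert_2 \le \alpha\,\lVert M(\dvec{e}_0,\dvec{d}_0)\rVert_2 + \beta$ for every $n$ and every admissible $\dvec{v}_1^{n-2}$. There is no real obstacle here; the only subtlety is being careful to take suprema and infinite sums so that $\alpha$ and $\beta$ are genuinely independent of $n$, and to notice that the compactness of $\mathcal{B}^{m}(\Delta)$ is what makes the $\dvec{v}_{k}$ contributions uniformly bounded even though their number grows with $n$. This lemma, together with Lemma \ref{lemm:sigset}, will then feed into the ergodic-theoretic argument in the remainder of Section \ref{sec:ergopfs} by giving pointwise Gaussian lower bounds on the $n$-step transition densities that are uniform over bounded initial conditions.
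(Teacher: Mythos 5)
Your proof is correct and follows essentially the same route as the paper's (very terse) argument: decompose $\mu_n$ via the triangle inequality, bound $\lVert \dvec{v}_j\rVert_2$ by $\tfrac{\Delta\sqrt{m}}{2}$, and use Gelfand's theorem (Lemma~\ref{lemm:gelfand}) to bound $\lVert R^{n-1}\rVert_2$ and the geometric sum $\sum_i \lVert R^i\rVert_2$ uniformly in $n$. Your write-up simply makes explicit the choice of $\alpha$ and $\beta$ that the paper leaves implicit.
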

\begin{proof}
The proof is analogous to that of Lemma \ref{lemm:sigset}. Lemma \ref{lemm:gelfand} is used to bound $\lVert R^{n-1} \rVert_2$ and $\sum_{i=0}^{n-2}\lVert R^{i}\rVert_{2}$. We also use the fact that for  $\dvec{v}_{j} \in  \mathcal{B}^{m}(\Delta)$, $\lVert\dvec{v}_{j} \rVert_{2}\le \frac{\Delta\sqrt{m}}{2}$.
\end{proof}  For $\alpha$, $\beta$, and $c$ as defined in Lemmas \ref{lemm:sigset} and \ref{lemm:muset}, define a subset of $\mathbb{R}^{m}\times \mathbb{R}^{m\times m}$ via $
\mathcal{M}(e_0,d_0,m,L,R,\Delta) =\{ (\dvec{\mu},\Sigma)\in\mathbb{R}^{m}\times \mathbb{R}^{m\times m}: \lVert\dvec{\mu}\rVert_2 \le \alpha\lVert M(e_0,d_0) \rVert+\beta, \Sigma\succeq W, \lVert\Sigma\rVert_2\le c  \}$. We have that $\mathcal{M}(e_0,d_0,m,L,R,\Delta)$ is compact (closed and bounded). This helps prove the main result of this section.
\begin{lemma}\label{lemm:exist}
There exists an \textit{invariant} PDF $g_{\infty}$ such that 
\begin{align}\label{eq:invarpdfdef}
    \int_{\mathbb{D}} f_{t+1|t}(\dvec{e}, \dvec{d}|\dvec{x},\dvec{y})g_{\infty}(\dvec{x},\dvec{y})d\dvec{x}d\dvec{y} =  g_{\infty}(\dvec{e},\dvec{d}),
\end{align} and $g_{\infty}(\dvec{x},\dvec{y})>0$ for all $(\dvec{x},\dvec{y})\in\mathbb{D}$.
\end{lemma}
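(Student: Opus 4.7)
The plan is to construct $g_{\infty}$ explicitly by exploiting the independence structure from Lemma \ref{lemm:summarylemma}(1). Because the dithered quantization error satisfies $\rvec{v}_{t}\perp(\rvec{e}^{t},\rvec{w}^{t})$ and is uniform on $\mathcal{B}^{m}(\Delta)$, the recursion (\ref{eq:recursion}) becomes $\rvec{e}_{t+1}=R\rvec{e}_{t}+\rvec{\zeta}_{t}$ with effective innovation $\rvec{\zeta}_{t}:=-L\rvec{v}_{t}+\rvec{w}_{t}$. A short bookkeeping argument combining the factorization (\ref{eq:ditherFactorization}) with Lemma \ref{lemm:summarylemma}(1) shows that the pairs $\{(\rvec{v}_{t},\rvec{w}_{t})\}$ are IID and that each $\rvec{\zeta}_{t}$ is independent of $\rvec{e}_{t}$. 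I would then be dealing with a stable linear system driven by IID noise. Since $\rho_{\max}(R)<1$, Lemma \ref{lemm:gelfand} gives $\sum_{i=0}^{\infty}\lVert R^{i}\rVert_{2}<\infty$; together with $\mathbb{E}[\lVert\rvec{\zeta}_{0}\rVert_{2}^{2}]<\infty$, this guarantees that $\rvec{e}_{\star}:=\sum_{i=0}^{\infty}R^{i}\rvec{\zeta}_{i}$ converges in $L^{2}$ and almost surely. I would propose $g_{\infty}(\dvec{e},\dvec{d}):=f_{\rvec{e}_{\star}}(\dvec{e})\cdot \frac{1_{\dvec{d}\in\mathcal{B}^{m}(\Delta)}}{\Delta^{m}}$ as the invariant PDF.

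To verify (\ref{eq:invarpdfdef}), I would note that the telescoping $\rvec{e}_{\star}=\rvec{\zeta}_{0}+R\sum_{i=0}^{\infty}R^{i}\rvec{\zeta}_{i+1}$ isolates an independent copy of $\rvec{e}_{\star}$ in the tail sum, so $\rvec{e}_{\star}\stackrel{d}{=}R\rvec{e}_{\star}'+\rvec{\zeta}'$ with $\rvec{e}_{\star}'\perp\rvec{\zeta}'$. Drawing $(\rvec{e},\rvec{d})\sim g_{\infty}$ and applying the one-step kernel (\ref{eq:transisionkernel}) yields an output whose $\rvec{e}$-component is $R\rvec{e}+\rvec{\zeta}_{\mathrm{new}}$, where by the dithered-quantization lemma invoked in Section III the new innovation is a distributional copy of $\rvec{\zeta}_{0}$ independent of $\rvec{e}$; the dither-component is a fresh uniform. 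Hence the pushforward of $g_{\infty}$ under the one-step kernel is $g_{\infty}$. For strict positivity I would write $\rvec{e}_{\star}=\rvec{w}_{0}+\rvec{\xi}$ with $\rvec{\xi}:=-L\rvec{v}_{0}+R\sum_{i=1}^{\infty}R^{i-1}\rvec{\zeta}_{i}$ independent of $\rvec{w}_{0}$. Conditional on $\rvec{\xi}$, $\rvec{e}_{\star}\sim\mathcal{N}(\rvec{\xi},W)$ with $W\succ 0_{m\times m}$, so $N(\dvec{e};\rvec{\xi},W)>0$ pointwise; marginalization preserves strict positivity. Combined with the uniform dither factor, $g_{\infty}>0$ on $\mathbb{D}^{m}$.

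The main obstacle is the bookkeeping in the first step: carefully parlaying the pairwise-independence statements in Lemma \ref{lemm:summarylemma}(1) and the kernel factorization (\ref{eq:ditherFactorization}) into the joint IID structure of $\{\rvec{\zeta}_{t}\}$ and the independence $\rvec{\zeta}_{t}\perp\rvec{e}_{t}$. Once this is in place the rest reduces to classical facts about stable autoregressions. An alternative route, more in line with the preparatory Lemmas \ref{lemm:normalPDF}--\ref{lemm:muset} and the compact set $\mathcal{M}$, is to view $f_{\rvec{e}_{n}|\rvec{e}_{0},\rvec{d}_{0}}$ as a mixture of Gaussians with parameters in $\mathcal{M}$, establish tightness of the time-averaged marginals via compactness of $\mathcal{M}$, extract a weak limit using Prokhorov, and recognize the limit as invariant via Krylov--Bogolyubov; strict positivity again follows from the Gaussian-mixture structure. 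The explicit infinite-series construction I have sketched is essentially the closed-form solution of this variational problem.
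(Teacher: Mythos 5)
Your construction is correct, but it takes a genuinely different route from the paper. The paper does not build $g_{\infty}$ explicitly: it invokes a result of It\^{o}-type ergodic theory ([Theorem 5] of the cited reference) stating that an everywhere-positive invariant PDF for the kernel (\ref{eq:transisionkernel}) exists if and only if every ``weakly transient'' set is Lebesgue-null, and then rules out non-null weakly transient sets by lower-bounding the $n$-step transition density uniformly in $n$: using Lemma \ref{lemm:normalPDF} the conditional law of $\rvec{e}_{n}$ given $(\rvec{e}_0,\rvec{d}_0,\rvec{v}_1^{n-1})$ is Gaussian with parameters confined to the compact set supplied by Lemmas \ref{lemm:sigset}--\ref{lemm:muset}, so $N(\dvec{x};\dvec{\mu},\Sigma)$ attains a positive minimum there and every positive-measure set is visited with probability at least $\epsilon>0$ at every time. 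Your argument instead writes the stationary law in closed form as the a.s.-convergent series $\sum_{i\ge 0}R^{i}(\rvec{\omega}_{i}-L\rvec{\nu}_{i})$, checks invariance through the distributional fixed-point identity together with the dithered-quantization lemma (which applies under your product law precisely because you build in $\rvec{d}\perp\rvec{e}$ with $\rvec{d}$ uniform), and gets strict positivity by conditioning on everything except $\rvec{w}_{0}$, so the density is a mixture of nondegenerate Gaussians with covariance $W\succ 0$. This is more elementary and constructive, and it yields exactly the series representation of the invariant law that the paper itself uses later (in the proof of Lemma \ref{lemm:klconv}, where it is justified only via the weak convergence of Lemma \ref{lemm:ergo}); your version would make that step self-contained. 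What the paper's machinery buys is reuse: the irreducibility/aperiodicity and positivity apparatus set up for Lemma \ref{lemm:exist} is what drives Lemma \ref{lemm:ergo} (convergence from an arbitrary continuous initialization), which your explicit construction does not by itself provide. Two small remarks: the ``main obstacle'' you identify (joint IID structure of $\{(\rvec{v}_t,\rvec{w}_t)\}$ and $\rvec{\zeta}_t\perp\rvec{e}_t$ for the physical process) is not actually needed for this lemma, since the statement only asks for an invariant density of the fixed kernel (\ref{eq:transisionkernel}) and you may take the driving sequence to be an auxiliary IID sequence, exactly as the paper does in Lemma \ref{lemm:klconv}; and since densities are only defined a.e., you should fix the canonical version $f_{\rvec{e}_{\star}}(\dvec{e})=\mathbb{E}[N(\dvec{e};\rvec{\xi},W)]$, for which both the everywhere-positivity claim and the pointwise equality in (\ref{eq:invarpdfdef}) hold exactly.
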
 
\begin{proof}
 A set $F\in\mathbb{B}(\mathbb{D})$ is called \textit{weakly transient} with respect to the Markov kernel (\ref{eq:transisionkernel}) if there exists a sequence of positive integers $n_{1}<n_{2}<\dots$ such that 
\begin{align}\label{eq:transientdef}
    \sum_{i=1}^{\infty}\mathbb{P}_{\rvec{e}_{n_{i}},\rvec{d}_{n_i}|\rvec{e}_{0},\rvec{d}_{0}}[F|\rvec{e}_{0}=\dvec{e}_{0},\rvec{d}_{0}=\dvec{d}_{0}]<\infty
\end{align} holds for $\lambda$ almost-every $(\dvec{e}_{0},\dvec{d}_{0})\in\mathbb{D}$. A consequence of \cite[Theorem 5]{ito_invariant} is that there exists an invariant PDF $g_{\infty}$ satisfying (\ref{eq:invarpdfdef}) and $g_{\infty}(a,b)>0$ for all $(a,b)\in\mathbb{D}$ if and only if every weakly transient set $F$ has $\lambda(F) = 0$. To prove the lemma, we proceed via contradiction. We prove that any set $F\in\mathbb{B}(\mathbb{D})$ with $\lambda(F) > 0$ is not weakly transient. Let $F\subset \mathbb{D}$ have $\lambda(F)>0$. Since $\lambda(F)>0$, $F$ must contain an open ball, which in turn must contain a closed rectangle of positive Lebesgue measure, e.g. there exists a $\delta>0$ and point $(\dvec{e}_{F},\dvec{d}_F)\in  F$  such that the set $ H = \{(x,y)\in\mathbb{R}^{m}\times \mathbb{R}^{m} :\lVert x-\dvec{e}_{F}\rVert_{\infty}\le \frac{\delta}{2},\lVert y-d_{F}\rVert_{\infty}\le \frac{\delta}{2} \}$ has $H\subset F$.  Note $\lambda(H)=\delta^{2m}$. Define the ``e-section" of $H$ as the subset $H_{e} = \{x\in\mathbb{R}^{m}:\lVert x-\dvec{e}_{F}\rVert_{\infty}\le \frac{\delta}{2} \}$. As $H\subset F$, the conditional probability satisfies 
\begin{IEEEeqnarray}{rCl}\label{eq:wherewestart}
    \mathbb{P}_{\rvec{e}_{n_{i}},\rvec{d}_{n_i}|\rvec{e}_{0},\rvec{d}_{0}}[F|\dvec{e}_{0},\dvec{d}_{0}] &\ge& \int_{H}f_{t+n_{i}|t}(x,y|\dvec{e}_{0},\dvec{d}_0)dxdy\\ &\ge& \delta^{2m} \inf_{(\dvec{x},\dvec{y})\in H} f_{t+n_{i}|t}(\dvec{x},\dvec{y}|\dvec{e}_{0},\dvec{d}_0), \nonumber\\ &\ge& \frac{\delta^{2m}}{\Delta^{m}}\inf_{\dvec{x}\in H_{e}} f_{\rvec{e}_{n_{i}}|\rvec{e}_{0},\rvec{d}_{0}}(\dvec{x}|\dvec{e}_{0},\dvec{d}_0).\nonumber
\end{IEEEeqnarray} Since the ``minimum is less than the average", we have
\begin{multline}\label{eq:optim_about_to_relax}
 \inf_{\dvec{x}\in H_{e}} f_{\rvec{e}_{n_{i}}|\rvec{e}_{0},\rvec{d}_{0}}(\dvec{x}|\dvec{e}_{0},\dvec{d}_0) \ge \\ \inf_{\substack{\dvec{x}\in H_{e} \\ \dvec{v}_{1}^{n_{i}-1} \in (\mathcal{B}^{m}(\Delta))^{n-1} }} f_{\rvec{e}_{n_{i}}|\rvec{e}_{0},\rvec{d}_{0},\rvec{v}_{1}^{n_{i}-1}}(\dvec{x}|\dvec{e}_{0},\dvec{d}_0,\dvec{v}_{1}^{n_{i}-1}).
\end{multline} 
Using the explicit formula for $f_{\rvec{e}_{n_{i}}|\rvec{e}_{0},\rvec{d}_{0},\rvec{v}_{1}^{n_{i}-1}}$ given by (\ref{eq:normalStepFormula}) in Lemma \ref{lemm:normalPDF}, the right side of (\ref{eq:optim_about_to_relax}) can be written in terms of $\Sigma_{{n}_{i}}$ and $\mu_{n_{i}}(e_0,d_0,v_{1}^{{n}_{i}})$. However, for a fixed $(e_{0},d_{0})$, Lemma \ref{lemm:muset} proves a compact set that contains $\mu_{n_{i}}(e_{0},d_{0},v_{1}^{n_{i}})$ for any choice of $n_{i}$ and $v_{1}^{n_{i}}$. This set is given by  $\mathcal{C}_{\dvec{\mu}} = \{\dvec{\mu} \in \mathbb{R}^{m}: \lVert\mu \rVert_{2} \le  \alpha\lVert M(e_{0},d_{0})\rVert+\beta \}$. Likewise, for any $n_{i}$, $\Sigma_{n_{i}}$ falls within the compact set $\mathcal{C}_{\Sigma} = \{{\Sigma} \in \mathbb{R}^{m\times m}: \Sigma \succeq W, \lVert \Sigma \rVert_{2}\le c  \}$
Thus, we have, for any $n_{i}$
\begin{multline}\label{eq:explicitminimization}
    \inf_{\substack{\dvec{x}\in H_{e} \\ \dvec{v}_{1}^{n_{i}-1} \in (\mathcal{B}^{m}(\Delta))^{n-1} }} f_{\rvec{e}_{n_{i}}|\rvec{e}_{0},\rvec{d}_{0},\rvec{v}_{1}^{n_{i}-1}}(\dvec{x}|\dvec{e}_{0},\dvec{d}_0,\dvec{v}_{1}^{n_{i}-1}) \ge\\  \inf_{\substack{\dvec{x}\in H_{e}\text{, }\dvec{\mu} \in \mathcal{C}_{\dvec{\mu}}\text{, }\Sigma \in \mathcal{C}_{\Sigma}  }}  N(x; \dvec{\mu}, \Sigma). 
\end{multline} Note that the optimization on the right of (\ref{eq:explicitminimization}) does not depend on $n_{i}$. At every point $(x,\mu,\Sigma)$ where $\Sigma \succ 0_{m\times m}$, the positive function $N(x;\mu,\Sigma):  \mathbb{R}^{m}\times \mathbb{R}^{m} \times \mathbb{R}^{m\times m}\rightarrow \mathbb{R}$ is continuous. Note that the subset
 $H_{e}\times \mathcal{C}_{\mu} \times \mathcal{C}_{\Sigma} \subset \mathbb{R}^{m}\times\mathbb{R}^{m} \times  \mathbb{R}^{m\times m}$ is closed and bounded, thus compact, and that $\Sigma \in \mathcal{C}_{\Sigma} \rightarrow \Sigma\succ 0_{m\times m}$. Thus we have that there exists $\epsilon > 0$ such that $ \inf_{\substack{\dvec{x}\in H_{e}, \dvec{\mu} \in \mathcal{C}_{\dvec{\mu}},\text{ } \Sigma \in \mathcal{C}_{\Sigma}  }}  N(x; \dvec{\mu}, \Sigma)\ge \epsilon$. Thus, following the inequalities from (\ref{eq:wherewestart}) through (\ref{eq:explicitminimization}) gives, for any $n_i$ that $     \mathbb{P}_{\rvec{e}_{n_{i}},\rvec{d}_{n_i}|\rvec{e}_{0},\rvec{d}_{0}}[F|\dvec{e}_{0},\dvec{d}_{0}] > \epsilon$. Thus for any $F$ with $\lambda(F)>0$ the terms in the summation (\ref{eq:transientdef}) are bounded from below by $\epsilon$. For any subsequence $\{n_{i}\}$ of $\mathbb{N}$ the summation in (\ref{eq:transientdef}) diverges, which implies $F$ is \textit{not} weakly transient. Thus, if $F$ is weakly transient, it must have $\lambda(F)=0$, which via \cite[Theorem 5]{ito_invariant} proves the lemma. 
\end{proof} The next ensures that the sequence $\{\rvec{e}_{t},\rvec{d}_{t}\}$ converges to $g_{\infty}$.
\begin{lemma}\label{lemm:ergo}
Let $(\rvec{e}_{\infty},\rvec{d}_{\infty})\sim g_{\infty}$, and  $(\rvec{e}_{0},\rvec{d}_{0})$ be drawn from a continuous distribution on $\mathbb{D}$. The sequence of random variables $\{(\rvec{e}_{t},\rvec{d}_{t})\}$ converges weakly to $(\rvec{e}_{\infty},\rvec{d}_{\infty})$.
\end{lemma}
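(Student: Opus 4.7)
My plan is to exploit the separable structure of the Markov kernel (\ref{eq:transisionkernel}) to reduce the joint weak convergence to a convergence statement for the $\rvec{e}_t$-marginal, which evolves as a stable linear time-invariant recursion driven by IID noise. First, I observe that the one-step kernel factorizes so that $\rvec{d}_{t+1}$ is uniform on $\mathcal{B}^m(\Delta)$ and independent of the past; hence for every $t\ge 1$ the marginal of $\rvec{d}_t$ is uniform and $\rvec{d}_t \perp \rvec{e}_t$. Substituting (\ref{eq:transisionkernel}) into the invariance identity (\ref{eq:invarpdfdef}) of Lemma \ref{lemm:exist} forces $g_\infty$ to factorize as $g_\infty(\dvec{e},\dvec{d}) = g_\infty^e(\dvec{e}) \cdot \Delta^{-m}1_{\dvec{d}\in\mathcal{B}^m(\Delta)}$ for some density $g_\infty^e$. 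Thus joint weak convergence of $(\rvec{e}_t,\rvec{d}_t)$ to $g_\infty$ reduces to weak convergence of $\rvec{e}_t$ to the distribution with density $g_\infty^e$.

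Next I reduce the $\rvec{e}_t$ analysis to a clean linear recursion. By Lemma \ref{lemm:summarylemma}(1), $\{\rvec{v}_t\}$ is IID with each $\rvec{v}_t$ uniform on $\mathcal{B}^m(\Delta)$ and $\rvec{v}_t$ independent of $(\rvec{e}^t,\rvec{w}^t,\rvec{x}_0)$; combined with the IID Gaussian $\{\rvec{w}_t\}$ assumption, a short check shows that $\rvec{\zeta}_t := -L\rvec{v}_t+\rvec{w}_t$ is an IID sequence independent of $\rvec{e}_t$. The recursion (\ref{eq:recursion}) then reads $\rvec{e}_{t+1} = R\rvec{e}_t + \rvec{\zeta}_t$, and unrolling gives $\rvec{e}_t$ equal in distribution to $R^t\rvec{e}_0 + \sum_{k=0}^{t-1} R^k \rvec{\zeta}_k'$ for IID copies $\{\rvec{\zeta}_k'\}$ of $\rvec{\zeta}_0$.

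To close the argument I plan to use characteristic functions and L\'evy's continuity theorem. Since $\rho_{\max}(R)<1$, Lemma \ref{lemm:gelfand} yields $\lVert R^k\rVert_2 \le \gamma^k$ eventually for some $\gamma\in(0,1)$. Because $\rvec{\zeta}_0$ is the sum of a bounded and a Gaussian vector it has finite mean, so $|1-\varphi_{\rvec{\zeta}}(s)| \le \lVert s\rVert_2\,\mathbb{E}[\lVert\rvec{\zeta}_0\rVert_2]$, which makes $\prod_{k=0}^\infty \varphi_{\rvec{\zeta}}((R^k)^{\tp}u)$ absolutely convergent for every $u\in\mathbb{R}^m$. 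Moreover $\varphi_{\rvec{e}_0}((R^t)^{\tp}u)\to 1$ by continuity of characteristic functions at zero. Therefore the characteristic function of $\rvec{e}_t$ converges pointwise to a limit continuous at zero, so L\'evy's theorem delivers weak convergence of $\rvec{e}_t$ to a limit $\rvec{e}_\infty$. Because $g_\infty^e$ is an invariant density that is strictly positive on $\mathbb{R}^m$ (inherited from Lemma \ref{lemm:exist}) and the driving noise is non-degenerate, the chain has a unique invariant distribution; hence $\rvec{e}_\infty$ has density $g_\infty^e$, and combining with the independent uniform $\rvec{d}_t$ gives the joint claim.

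The main obstacle I expect is justifying the decoupling that $\rvec{\zeta}_t$ is IID and independent of $\rvec{e}_t$: since $\rvec{v}_t$ is a deterministic function of $(\rvec{e}_t,\rvec{d}_t)$ this is not obvious a priori, and rests entirely on the Sripad--Snyder property of dithered uniform quantization packaged into Lemma \ref{lemm:summarylemma}(1). Uniqueness of the invariant measure is a secondary but easier point, following from everywhere-positivity of $g_\infty$ together with the irreducibility furnished by the Gaussian component of the kernel, essentially as in the proof of Lemma \ref{lemm:exist}.
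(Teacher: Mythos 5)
Your argument is essentially sound, but it follows a genuinely different route from the paper's. The paper treats $\{(\rvec{e}_t,\rvec{d}_t)\}$ as an abstract time-homogeneous Markov chain that admits an invariant density (Lemma \ref{lemm:exist}) and then invokes the general Markov-chain convergence result \cite[Theorem 4]{mcmcReview}: it verifies $\lambda$-irreducibility directly from the everywhere-positive Gaussian factor of the kernel (\ref{eq:transisionkernel}), rules out periodicity by contradiction using that same positivity, and concludes weak convergence to $g_\infty$ from the cited theorem. You instead exploit the specific structure of the recursion: the dithered-quantization (Schuchman/Sripad--Snyder) property packaged in Lemma \ref{lemm:summarylemma}(1) turns (\ref{eq:recursion}) into a stable AR(1) recursion driven by the IID noise $-L\rvec{v}_t+\rvec{w}_t$, weak convergence of $\rvec{e}_t$ then follows from a characteristic-function/L\'evy continuity argument, and the dither coordinate is handled by the product factorization of $g_\infty$, which you obtain directly from (\ref{eq:invarpdfdef}) and which the paper records only afterwards as Corollary \ref{corr:invarprops}. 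Your route is more elementary and self-contained (no irreducibility/aperiodicity machinery) and yields the explicit representation of the limit as the law of $\sum_{k\ge 0}R^k(\rvec{w}_k-L\rvec{v}_k)$ --- a representation the paper itself uses later in the proof of Lemma \ref{lemm:klconv}; the paper's route is less structure-dependent and would survive changes to the quantizer, since it only needs an invariant density plus irreducibility and aperiodicity.

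Two steps in your write-up need tightening. First, the unrolled identity $\rvec{e}_t\overset{D}{=}R^t\rvec{e}_0+\sum_{k=0}^{t-1}R^k(\rvec{w}_k-L\rvec{v}_k)$ with all increments IID and independent of the past is not justified at $k=0$ under the hypotheses of this lemma: $(\rvec{e}_0,\rvec{d}_0)$ is an arbitrary continuous law on $\mathbb{D}$, so $\rvec{d}_0$ need be neither uniform nor independent of $\rvec{e}_0$, and the uniformity/independence of $\rvec{v}_0$ can fail; the fix is simply to start the unrolling at $t=1$, where the kernel (\ref{eq:transisionkernel}) guarantees $\rvec{d}_1$ is uniform and independent of $\rvec{e}_1$, and to note that $\varphi_{\rvec{e}_1}((R^{t-1})^{\tp}u)\rightarrow 1$ for any law of $\rvec{e}_1$. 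Second, and more substantively, the step ``the chain has a unique invariant distribution, hence $\rvec{e}_\infty$ has density $g_{\rvec{e}_\infty}$'' has a gap: uniqueness identifies the limit only once you know the weak limit is itself invariant, and because $Q_\Delta$ is discontinuous the chain is not Feller, so ``weak limits of marginals are invariant'' is not automatic. With your representation the repair is one line: if $\rvec{e}_\infty\overset{D}{=}\sum_{k\ge0}R^k(\rvec{w}_k-L\rvec{v}_k)$ and $(\rvec{w},\rvec{v})$ is a fresh independent pair, then $R\rvec{e}_\infty+\rvec{w}-L\rvec{v}\overset{D}{=}\rvec{e}_\infty$, so the limit law is invariant for the $\rvec{e}$-marginal chain, and uniqueness (via the $\lambda$-irreducibility already established in the proof of Lemma \ref{lemm:exist}) completes the identification. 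Alternatively you can avoid uniqueness altogether: run your characteristic-function argument with the chain initialized at $g_\infty$ (legitimate, since your factorization makes $\rvec{d}_0$ uniform and independent of $\rvec{e}_0$); the marginal of $\rvec{e}_t$ is then $g_{\rvec{e}_\infty}$ for every $t$ while converging weakly to the law of the series, so the two laws coincide.
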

\begin{proof}
We can prove this result using \cite[Theorem 4]{mcmcReview} adapted to the special case when a first-order time-homogeneous Markov chain  admits an invariant PDF (\ref{eq:invarpdfdef}) (in other words. when the chain admits an invariant measure that is absolutely continuous with respect to Lebesgue measure). A Markov chain $\{\rvec{g}_{t}\}$ on a state-space $\mathbb{G}$ is said to be  $\phi$-irreducible if there exists a nonzero $\sigma$-finite measure $\phi$ such that for all measurable $\mathcal{G}\subset \mathbb{G}$ with $\phi(\mathcal{G})>0$ and all initial conditions $\rvec{z}_{0}=\dvec{z}_{0}$ with $z_0\in\mathbb{D}$ we can find an integer $n$ such that $\mathbb{P}_{\rvec{z}_{n}|\rvec{z}_{0}}[\rvec{z}_{n}\in\mathcal{A}|\rvec{z}_{0}= {z}_{0}] > 0$.
 A Markov chain $\{\rvec{g}_{t}\}$ on $\mathbb{G}$ is \textit{aperiodic} if there does not exist $d\ge 1$ and disjoint nonempty measurable subsets $\mathcal{G}_{0},\mathcal{G}_{1}, \dots \mathcal{G}_{d-1}$ such that when $\dvec{g}_{n-1}\in\mathcal{G}_{i}$, then  $ \mathbb{P}_{\rvec{g}_{n}|\rvec{g}_{n-1}}[\rvec{g}_{n}\in\mathcal{G}_{i+1\text{ mod } d }|\rvec{z}_{n-1}=\dvec{z}_{n-1}] = 1$.  A consequence of \cite[Theorem 4]{mcmcReview} is that if the Markov chain $\{\rvec{g}_{t}\}$ admits an invariant PDF $g_{\infty}$, is $\phi$-irreducible, aperiodic, and $\rvec{g}_{0}$ is a continuous random variable then the $\rvec{g}_{i}$ converge weakly to the measure induced by the invariant PDF. Consider the Markov chain $\{\rvec{e}_{t},\rvec{d}_{t}\}$ on $\mathbb{D}$, and recall that the Lebesgue measure $\lambda$ on $\mathbb{D}$ is countably generated. Take $\mathcal{A}\subset \mathbb{D}$ with $\lambda(\mathcal{A})>0$, and let $(\dvec{e}_{0},\dvec{d}_{0})\in \mathbb{D}$ be arbitrary. We have 
\begin{multline}\label{eq:irreducibilitypf}
    \mathbb{P}_{\rvec{e}_{1},\rvec{d}_{1}|\rvec{e}_{0},\rvec{d}_{0}}[(\rvec{e}_{1},\rvec{d}_{1})\in\mathcal{A}|(\rvec{e}_{0},\rvec{d}_{0})= (\dvec{e}_{0},\dvec{d}_{0})]   =\\ \int_{\mathcal{A}} f_{t+1|t}(\dvec{e},\dvec{d}|\dvec{e}_{0},\dvec{d}_{0})ddde.
\end{multline} It is immediate from the defition of $f_{t+1|t}$ in (\ref{eq:transisionkernel}) and the fact that $\mathcal{A}\subset \mathbb{D}$ that $  \mathbb{P}_{\rvec{e}_{1},\rvec{d}_{1}|\rvec{e}_{0},\rvec{d}_{0}}[(\rvec{e}_{1},\rvec{d}_{1})\in\mathcal{A}|(\rvec{e}_{0},\rvec{d}_{0})  (\dvec{e}_{0},\dvec{d}_{0})]\ge 0$. Thus the Markov chain $\{\rvec{e}_{t},\rvec{d}_{t}\}$ is $\lambda-$irreducible. 

Via the same logic, it can be shown by contradiction that the chain is aperiodic. Assume that the chain is periodic, e.g. that for $d\ge2$, there exist disjoint nonempty measurable subsets $\mathcal{A}_{0},\mathcal{A}_1,\dots, \mathcal{A}_{d-1}\subset \mathbb{D}$ satisfying such that if $(\dvec{e}_{n-1},\dvec{d}_{n-1})\in\mathcal{A}_{i}$ then   $\mathbb{P}_{\rvec{e}_{n},\rvec{d}_{n}|\rvec{e}_{n-1},\rvec{d}_{n-1}}[(\rvec{e}_{n},\rvec{d}_{n})\in\mathcal{A}_{i+1\text{ mod } d }|\rvec{z}_{n-1}=\dvec{z}_{n-1}] = 1$.  Assume that $\mathcal{A}_{i}$ has $\lambda(\mathcal{A}_{i})>0$ (since the sets $\{\mathcal{A}_k\}$ partition $\mathbb{D}$, at least one of the sets must have positive Lebesgue measure). Take $(e_t,d_t)\in\mathcal{A}_{i}$, 
and let $j = i+1 \mod d$. By the assumption of periodicity
\begin{multline}\label{eq:periodassum}
    \mathbb{P}_{t+1|t}[(\rvec{e}_{t+1},\rvec{d}_{t+1})\in\mathcal{A}_{j}|(\rvec{e}_t,\rvec{d}_{t})=(\dvec{e}_t,\dvec{d}_t)] = 1. 
\end{multline} If $\mathcal{A}_{i}$ has $\lambda(\mathcal{A}_{i})>0$, for any $(e_t,d_t)$ our proof of irreducibility guarantees that $\exists$ $\epsilon>0$ such that $\mathbb{P}[(\rvec{e}_{t+1},\rvec{d}_{t+1})\in \mathcal{A}_{i}|\rvec{e}_t=\dvec{e}_t, \rvec{d}_t=\dvec{d}_t] \ge\epsilon$. As we assumed $\mathcal{A}_{i}\cap\mathcal{A}_{j}=\emptyset$, this contradicts (\ref{eq:periodassum}). Thus the Markov chain $\{\rvec{e}_{i},\rvec{d}_{i}\}$ is aperiodic, and by \cite[Theorem 4]{mcmcReview} we have the proof. 
\end{proof}
The next lemma characterizes the invariant PDF $g_{\infty}$.
\begin{corollary}\label{corr:invarprops}
Assume $(\rvec{e},\rvec{d})\sim g_{\infty}$. Denote the marginal PDF of $\rvec{e}$ via $g_{\rvec{e}_{\infty}}(e) = \int_{\mathcal{B}^m_{\Delta}}g_{\infty}(e,\delta)d\delta$.
We have that $\rvec{e}\perp \rvec{d}$ and $\rvec{d}\sim \text{Uniform}[-\Delta/2,\Delta/2]$. This implies that, $g_{\infty}\colon\mathbb{D}\rightarrow \mathbb{R}$ factorizes via $g_{\infty}(e,d) = g_{\rvec{e}_{\infty}}(e)/\Delta^m$ for $(e,d)\in\mathbb{D}$.
\end{corollary}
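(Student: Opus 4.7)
The plan is to prove the corollary by a direct calculation that exploits the factored form of the one-step transition kernel $f_{t+1|t}$ given in (\ref{eq:transisionkernel}). The key observation is that in (\ref{eq:transisionkernel}), the dependence on the ``next'' dither variable $\dvec{d}$ is already isolated as a multiplicative factor $\frac{1_{\dvec{d}\in\mathcal{B}^m(\Delta)}}{\Delta^m}$ that does not involve the ``past'' state $(\dvec{x},\dvec{y})$ or the ``next'' innovation $\dvec{e}$. This forces the invariant density to inherit the same product structure.

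First, I would substitute (\ref{eq:transisionkernel}) into the invariance equation (\ref{eq:invarpdfdef}). Pulling the $\dvec{d}$-only factor outside the integral over $(\dvec{x},\dvec{y})$ gives
\begin{equation*}
    g_{\infty}(\dvec{e},\dvec{d}) \;=\; \frac{1_{\dvec{d}\in\mathcal{B}^m(\Delta)}}{\Delta^m}\int_{\mathbb{D}} N(\dvec{e};M(\dvec{x},\dvec{y}),W)\, g_{\infty}(\dvec{x},\dvec{y})\, d\dvec{x}\,d\dvec{y}.
\end{equation*}
Call the second factor $h(\dvec{e})$. By construction $h$ is a function of $\dvec{e}$ alone, and the displayed equation exhibits $g_{\infty}$ as a product $g_{\infty}(\dvec{e},\dvec{d}) = h(\dvec{e})\cdot\frac{1_{\dvec{d}\in\mathcal{B}^m(\Delta)}}{\Delta^m}$.

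Next, I would identify $h$ with the marginal $g_{\rvec{e}_\infty}$. Integrating both sides over $\dvec{d}\in\mathcal{B}^m(\Delta)$ yields $g_{\rvec{e}_\infty}(\dvec{e})=h(\dvec{e})$, since $\int_{\mathcal{B}^m(\Delta)}\Delta^{-m}d\dvec{d}=1$. Hence
\begin{equation*}
    g_{\infty}(\dvec{e},\dvec{d}) \;=\; g_{\rvec{e}_\infty}(\dvec{e})\cdot\frac{1_{\dvec{d}\in\mathcal{B}^m(\Delta)}}{\Delta^m},
\end{equation*}
which is exactly the claimed factorization on $\mathbb{D}$. The product form immediately implies $\rvec{e}\perp\rvec{d}$, and the $\dvec{d}$-marginal $\frac{1_{\dvec{d}\in\mathcal{B}^m(\Delta)}}{\Delta^m}$ is the PDF of a vector of IID $\mathrm{Uniform}[-\Delta/2,\Delta/2]$ components, giving the stated distribution of $\rvec{d}$.

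There is no real obstacle here; the entire content of the corollary is already encoded in the structure of (\ref{eq:transisionkernel}). The only minor point to verify is that pulling the indicator $1_{\dvec{d}\in\mathcal{B}^m(\Delta)}$ outside the integral is legitimate, which is immediate since $\dvec{d}$ is not an integration variable, and that the resulting $h$ is indeed a well-defined density (which follows from Lemma \ref{lemm:exist}, as $g_\infty$ is a bona fide PDF, so Fubini justifies the computation of the marginal).
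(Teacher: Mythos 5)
Your proof is correct, and it takes a more direct route than the paper. The paper argues at the level of measures: it evaluates $\mathbb{P}[(\rvec{e},\rvec{d})\in\mathcal{A}_1\times\mathcal{A}_2]$ on open rectangles using the invariance relation and the factored kernel (\ref{eq:transisionkernel}), observes that this probability splits as $\int_{\mathcal{A}_1}g_{\rvec{e}_{\infty}}(e)\,de\cdot\lambda(\mathcal{A}_2)/\Delta^m$, and then invokes Dynkin's $\pi$-$\lambda$ theorem to upgrade the product structure from the generating $\pi$-system of rectangles to all of $\mathbb{B}(\mathbb{D})$, yielding $\rvec{e}\perp\rvec{d}$ and the uniform marginal. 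You instead work pointwise at the density level: substituting (\ref{eq:transisionkernel}) into (\ref{eq:invarpdfdef}) and pulling the $\dvec{d}$-only factor $1_{\dvec{d}\in\mathcal{B}^m(\Delta)}/\Delta^m$ outside the integral (legitimate, since $\dvec{d}$ is not an integration variable, and Tonelli handles the marginalization because everything is nonnegative) immediately exhibits $g_{\infty}(\dvec{e},\dvec{d})=g_{\rvec{e}_{\infty}}(\dvec{e})\cdot 1_{\dvec{d}\in\mathcal{B}^m(\Delta)}/\Delta^m$, from which independence and the uniform law of $\rvec{d}$ follow with no appeal to Dynkin. Both arguments rest on the same structural fact—the next dither coordinate is uniform irrespective of the current state—but yours buys simplicity and delivers the density factorization (the form actually used later, e.g.\ in Lemma \ref{lemm:klconv}) in one step; the paper's set-level argument is insensitive to whether (\ref{eq:invarpdfdef}) is read pointwise or only $\lambda$-almost everywhere, a distinction that is immaterial here since Lemma \ref{lemm:exist} supplies the identity as stated and the conclusions are distributional.
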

\begin{proof}
Let $\mathcal{A}_1$ be an open ball in $\mathbb{R}^{m}$ and $\mathcal{A}_{2}$ be an open ball inside $\mathcal{B}^m(\Delta)$. Open intervals like  $\mathcal{A}_{1}\times\mathcal{A}_{2}$ form a $\pi$ system that generates the $\sigma-$algebra $\mathbb{B}(\mathbb{D})$. Using the definition of the transition kernel (\ref{eq:transisionkernel}) and the invariant PDF (\ref{eq:invarpdfdef}), it can be shown that if $\mathcal{P}=\mathcal{A}_{1}\times \mathcal{A}_{2}$ then, $ \mathbb{P}[(\rvec{e},\rvec{d})\in \mathcal{P}] = \int_{\mathcal{A_{1}}}g_{\rvec{e}_{\infty}}(e)de\frac{\lambda(\mathcal{A}_{2})}{\Delta^m}$. By Dynkin's $\pi-\lambda$ theorem, this proves that $\rvec{e}\perp \rvec{d}$ (see e.g., \cite[Prop. 2.13]{zit}).  
\end{proof} Let $(\rvec{e},\rvec{d})\sim g_{\infty}$, and define $\rvec{q} = Q_{\Delta}(C\rvec{e}+\rvec{d})$. We now prove that $\rvec{q}_{t}$ converges to $\rvec{q}$ in the KL sense. This is not obvious a priori as $\rvec{q}_{t}$ and $\rvec{q}$ have countably infinite support. 
\begin{lemma}\label{lemm:klconv}
$D_{\mathrm{KL}}(\rvec{q}_{t}||\rvec{q}) <\infty$  and $\underset{t\rightarrow\infty}{\lim} D_{\mathrm{KL}}(\rvec{q}_{t}||\rvec{q}) =0$.
\end{lemma}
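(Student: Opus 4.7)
The plan is to combine pointwise convergence of PMFs with Gaussian-type tail control on $\mathbb{P}[\rvec{q}=k]$ to lift the weak convergence of Lemma \ref{lemm:ergo} to convergence in KL divergence. Write $p_{t}(k)=\mathbb{P}[\rvec{q}_{t}=k]$ and $p(k)=\mathbb{P}[\rvec{q}=k]$. First I would establish pointwise convergence $p_{t}(k)\to p(k)$ on $\Delta\mathbb{Z}^{m}$: by Lemma \ref{lemm:ergo}, $(\rvec{e}_{t},\rvec{d}_{t})$ converges weakly to $(\rvec{e},\rvec{d})\sim g_{\infty}$, and because the marginal of $\rvec{d}$ is uniform (Corollary \ref{corr:invarprops}), $C\rvec{e}+\rvec{d}$ admits a Lebesgue density and thus assigns zero probability to the (Lebesgue-null) discontinuity set of $Q_{\Delta}$; the continuous mapping theorem then gives $\rvec{q}_{t}\Rightarrow \rvec{q}$, which on the countable lattice is equivalent to pointwise PMF convergence. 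Finiteness of $D_{\mathrm{KL}}(\rvec{q}_{t}\|\rvec{q})$ follows because $g_{\infty}>0$ on $\mathbb{D}$ (Lemma \ref{lemm:exist}) forces $p(k)>0$ on every cell $V_{k}$ accessible to $\rvec{q}_{t}$, so $\mathrm{supp}(\rvec{q}_{t})\subseteq \mathrm{supp}(\rvec{q})$ for every $t$.

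Second, I would write $D_{\mathrm{KL}}(\rvec{q}_{t}\|\rvec{q})=\mathbb{E}_{\rvec{q}_{t}}[-\log p(\rvec{q}_{t})]-H(\rvec{q}_{t})$ and drive each term to $H(\rvec{q})$. The central technical ingredient is a bound of the form $-\log p(k)\le c_{1}+c_{2}\|k\|_{2}^{2}$, established from a Gaussian lower envelope on the invariant density. Exploiting the stationary version of (\ref{eq:recursion}), $\rvec{e}_{\infty}\stackrel{d}{=}R\rvec{e}'-L\rvec{v}'+\rvec{w}$ with $\rvec{w}\sim\mathcal{N}(0,W)$ independent of $(\rvec{e}',\rvec{v}')$, I would restrict $(\rvec{e}',\rvec{v}')$ to a high-probability compact set and bound $g_{\rvec{e}_{\infty}}(e)$ below by the Gaussian kernel evaluated at the worst-case center, giving $g_{\rvec{e}_{\infty}}(e)\ge c_{3}\exp(-c_{4}\|e\|_{2}^{2})$; transferring through $C$ and convolution with the uniform dither produces the desired quadratic bound on $-\log p(k)$. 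For the companion uniform moment bound, I would unfold (\ref{eq:recursion}), using Lemma \ref{lemm:gelfand} together with Gaussianity of $\{\rvec{w}_{t}\}$ and $\|\rvec{v}_{t}\|_{\infty}\le \Delta/2$, to obtain $\sup_{t}\mathbb{E}[\|\rvec{e}_{t}\|_{2}^{4}]<\infty$ and hence $\sup_{t}\mathbb{E}[\|\rvec{q}_{t}\|_{2}^{4}]<\infty$.

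These pieces yield uniform integrability of $-\log p(\rvec{q}_{t})$ under $p_{t}$, so pointwise PMF convergence lifts to $\mathbb{E}_{\rvec{q}_{t}}[-\log p(\rvec{q}_{t})]\to H(\rvec{q})$. For the entropy term I would sandwich: Fatou applied to the nonnegative summands $-p_{t}(k)\log p_{t}(k)$ gives $\liminf_{t}H(\rvec{q}_{t})\ge H(\rvec{q})$ (lower semicontinuity of entropy under pointwise convergence), while Gibbs' inequality provides the matching upper bound $H(\rvec{q}_{t})\le \mathbb{E}_{\rvec{q}_{t}}[-\log p(\rvec{q}_{t})]\to H(\rvec{q})$. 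Hence $H(\rvec{q}_{t})\to H(\rvec{q})$ and $D_{\mathrm{KL}}(\rvec{q}_{t}\|\rvec{q})\to 0$. I expect the main obstacle to be the Gaussian lower envelope on $g_{\rvec{e}_{\infty}}$: because the invariant density is known only through Lemma \ref{lemm:exist} and not in closed form, one must leverage that a single step of the recursion convolves with $\mathcal{N}(0,W)$, so restricting the previous state to a high-probability compact set already produces a Gaussian-dominated lower envelope on the stationary density; once this is secured, the remainder reduces to routine uniform-integrability bookkeeping.
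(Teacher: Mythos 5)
Your route is genuinely different from the paper's. The paper never works with the PMFs of $\rvec{q}_{t}$ and $\rvec{q}$ directly: it first applies the data-processing inequality to the map $(\dvec{e},\dvec{d})\mapsto Q_{\Delta}(C\dvec{e}+\dvec{d})$ (using $\rvec{d}_{t}\overset{D}{=}\rvec{d}$ and the independence statements of Corollary \ref{corr:invarprops}) to get $D_{\mathrm{KL}}(\rvec{q}_{t}\|\rvec{q})\le D_{\mathrm{KL}}(\rvec{e}_{t}\|\rvec{e})$, then exploits the series representation of (\ref{eq:recursion}) to reduce the right-hand side, via conditioning on the tail $\rvec{s}_{>t}$, to an explicit KL divergence between Gaussians whose expectation is driven to zero using Lemma \ref{lemm:gelfand}. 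Your plan instead stays at the level of the quantizer output: pointwise PMF convergence from Lemma \ref{lemm:ergo} plus a cross-entropy/entropy sandwich, with a Gaussian lower envelope on the invariant density supplying the domination. The sandwich itself (Fatou for $\liminf H(\rvec{q}_{t})\ge H(\rvec{q})$, Gibbs for the upper bound, uniform integrability for the cross-entropy) is sound, and the one-step convolution argument does give $g_{\rvec{e}_{\infty}}(\dvec{e})\ge c_{3}e^{-c_{4}\lVert \dvec{e}\rVert_{2}^{2}}$ because $W\succ 0$ and $\lVert\rvec{v}\rVert_{\infty}\le\Delta/2$.

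The gap is in ``transferring through $C$'': the paper does not assume $C$ is invertible ($C^{\mathrm{T}}C\tfrac{12}{\Delta^{2}}=\hat{P}^{-1}-\hat{P}_{+}^{-1}$ is only PSD and is rank-deficient whenever the covariance constraint is inactive in some directions), and in that case the claimed bound $-\log p(k)\le c_{1}+c_{2}\lVert k\rVert_{2}^{2}$ can fail. With $\mathrm{rank}(C)<m$, the support of $\rvec{q}$ consists of lattice cells meeting the slab $\{C\dvec{e}+\dvec{d}:\dvec{e}\in\mathbb{R}^{m},\dvec{d}\in\mathcal{B}^{m}(\Delta)\}$, and a cell can intersect this slab arbitrarily tangentially: for directions of $\mathrm{range}(C)$ with Liouville-type rational approximations, the overlap volume (hence $p(k)$) for infinitely many ``barely accessible'' $k$ decays faster than $e^{-c\lVert k\rVert_{2}^{2}}$ for every $c$, so no quadratic envelope with uniform constants exists and the uniform-integrability/domination step collapses. (Relatedly, your opening claim that support inclusion already gives finiteness is not valid on a countably infinite alphabet; it is only rescued where the envelope bound holds.) When $C$ is invertible your argument does go through, since the preimage of each cell contains a ball of fixed radius around $C^{-1}k$ and the envelope on $g_{\rvec{e}_{\infty}}$ yields $p(k)\ge c_{5}e^{-c_{6}\lVert k\rVert_{2}^{2}}$. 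To cover the general case you would either have to carry out the analysis on the range of $C$ (quotienting out the unobserved directions), or do what the paper does and interpose the data-processing step $D_{\mathrm{KL}}(\rvec{q}_{t}\|\rvec{q})\le D_{\mathrm{KL}}(\rvec{e}_{t}\|\rvec{e})$ up front, which removes the quantizer geometry from the problem entirely.
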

\begin{proof}
Using the properties of the invariant measure and a data processing inequality for f-divergences \cite[Theorem 2.2 (6)]{polyanbk} one can prove that the sequence of relative entropies is monotonically decreasing, i.e., that $D_{\mathrm{KL}}(\rvec{q}_{t+1}||\rvec{q})\le D_{\mathrm{KL}}(\rvec{q}_{t}||\rvec{q})$ for all $t$. Since $D_{\mathrm{KL}}(\rvec{q}_{t}||\rvec{q}) \ge 0$, the limit as $t\rightarrow\infty$ exists. From applying the data processing inequality for KL divergences several times (cf. \cite[Theorem 2.2 (6) ]{polyanbk}), and the fact that both $\rvec{d}_{t}\perp \rvec{e}_{t}$ and by Corollary \ref{corr:invarprops} that both $\rvec{d}\perp\rvec{e}$ and $\rvec{d}_{t}$ and $\rvec{d}$ are identically distributed, we have that $D_{\mathrm{KL}}(\rvec{q}_{t}||\rvec{q})\le D_{\mathrm{KL}}(\rvec{e}_{t}||\rvec{e})$.
 Details can be found in the proof of \cite[Lemma IV.10]{ourJSAIT}. The rest of the proof follows from bounding $D_{\mathrm{KL}}(\rvec{e}_{t}||\rvec{e})$.

Let $\{\rvec{\nu}_{t}\}$ denote an IID sequence of random variables uniformly distributed on $\mathcal{B}^{m}(\Delta)$. Likewise, let $\{\rvec{\omega}_{t}\}$ be IID with $\rvec{\omega}_{t}\sim\mathcal{N}(0,W)$, and let $\rvec{\lambda}\sim\mathcal{N}(0,X_{0})$. Assume $\{\rvec{\omega}_{t}\}$, $\{\rvec{\nu}_{t}\}$, and $\rvec{\lambda}$ are mutually independent.  Let  ``$\rvec{a}\overset{D}{=}\rvec{b}$" denote ``$\rvec{a}$ and $\rvec{b}$ are identically distributed". By unwrapping the recursive definition of $\{\rvec{e}_t\}$ in (\ref{eq:recursion}), we have
    $\rvec{e}_{t} \overset{D}{=} R^{t}\rvec{\lambda}+\sum_{i=0}^{t-1}R^{i}(\rvec{\omega}_{i}-L\rvec{\nu}_{i})$. 
 Likewise, by definition of $\rvec{e}$, we have  that both $\rvec{e} \overset{D}{=} \lim_{t\rightarrow\infty}R^{t}\rvec{\lambda}+\sum_{i=0}^{t-1}R^{i}(\rvec{\omega}_{i}-L\rvec{\nu}_{i})$ and  $\rvec{e} \overset{D}{=}\lim_{t\rightarrow\infty}\sum_{i=0}^{t-1}R^{i}(\rvec{\omega}_{i}-L\rvec{\nu}_{i})$
following from the weak convergence guaranteed by Lemma \ref{lemm:ergo}. 
Define the random variables $\rvec{g}_{\le t} =  \sum_{i=0}^{t-1}R^{i}\rvec{\omega}_{i}$, $    \rvec{u}_{\le t} =  -\sum_{i=0}^{t-1}R^{i}L\rvec{\nu}_{i}$, and $\rvec{s}_{>t} =  \lim_{r\rightarrow \infty}\sum_{i=t}^{r}R^{i}(\rvec{\omega}_{i}-L\rvec{\nu}_{i})$. Note that the limit in the definition of  $\rvec{s}_{>t}$ is well defined via Lemma \ref{lemm:gelfand} in concert with Kolmogorov's two-series theorem. By definition, $\rvec{e}_{t} \overset{D}{=} R^{t}\rvec{\lambda}+ \rvec{g}_{\le t} +\rvec{u}_{\le t}$ and $\rvec{e} \overset{D}{=}  \rvec{g}_{\le t}+ \rvec{u}_{\le t} + \rvec{s}_{> t}$.
 Note that $\rvec{g}_{\le t}\sim \mathcal{N}(0,\sum\limits_{i=0}^{t-1}R W R^{\tp})$. We have
\begin{IEEEeqnarray}{rCl}
    D_{\mathrm{KL}}(\rvec{e}_{t}|| \rvec{e} ) &\le& D_{\mathrm{KL}}(R^{t}\rvec{\lambda}+\rvec{g}_{\le t}||\rvec{g}_{\le t} +\rvec{s}_{> t} )\label{eq:usedidiv} \\ &\le& D_{\mathrm{KL}}(R^{t}\rvec{\lambda}+\rvec{g}_{\le t} || \rvec{g}_{\le t}+\rvec{s}_{> t} | \rvec{s}_{> t} ),\label{eq:condincdiv}  
\end{IEEEeqnarray} where (\ref{eq:usedidiv}) follows from the data processing inequality for f-divergences and (\ref{eq:condincdiv}) follows since conditioning increases KL divergence (see \cite[Theorem 2.2 (5)]{polyanbk}). Given $\rvec{s}_{> t}=s$,  (\ref{eq:condincdiv}) simplifies to a KL divergence between multivariate Gaussians. Let $S_t =\sum_{i=0}^{t-1}R W R^{\tp}$ and $\overline{S}_{t}=S_{t}+R^{t}X_{0}(R^{\tp})^{t}$. Since $\rvec{\lambda}\perp\rvec{g}_{\le t}$ by construction, $R^{t}\rvec{\lambda}+\rvec{g}_{\le t}\sim \mathcal{N}(0,\overline{S}_{t})$. Recall also that $\rvec{g}_{\le t}\perp\rvec{s}_{> t}$  by construction. We have, then that $D_{\mathrm{KL}}(R^{t}\rvec{\lambda}+\rvec{g}_{\le t} || \rvec{g}_{\le t}+\rvec{s}_{> t} | \rvec{s}_{> t} = s ) = D_{\mathrm{KL}}(\mathcal{N}(0,\overline{S}_{t})|| \mathcal{N}(s,S_{t}))$.  Let $f_{t}:\mathbb{R}^{m}\rightarrow \mathbb{R}$ be defined via
\begin{align}\label{eq:ffuncdef}
    f_t(s) = \text{Tr}(S_{t}^{-1}\overline{S}_{t})+\dvec{s}^{\tp}S_{t}^{-1}\dvec{s}+
   \ln{\left(\det{S_{t}}/\det{\overline{S}_{t}}\right)}.
\end{align} We have that $ D_{\mathrm{KL}}(\mathcal{N}(0,\overline{S}_{t})|| \mathcal{N}(s,S_{t})) = \frac{1}{2}(f_t(s)-m)$, where the divergence is expressed in nats. Thus, via (\ref{eq:condincdiv}),
\begin{IEEEeqnarray}{rCl}
    D_{\mathrm{KL}}(\rvec{q}_{t} || \rvec{q} ) &\le& \frac{\mathbb{E}\left[(f_{t}(\rvec{s}_{>t})-m) \right]}{2}.\label{eq:expectedkl}
\end{IEEEeqnarray} Since $\rvec{s}_{>t}\in \mathcal{L}^2$ and $S_{t},\overline{S}_{t}\succ W$, it is clear that the right-hand side of (\ref{eq:expectedkl}) is finite for all $t$. We now demonstrate that $\lim_{t\rightarrow\infty}\mathbb{E}\left[(f_{t}(\rvec{s}_{>t})-m) \right]=0$, attacking the terms in (\ref{eq:ffuncdef}) one-at-a-time. As $\overline{S}_{t}\succeq {S}_{t} \succeq W \succ 0_{m\times m}$, the determinants $\det{{S}_{t}}$ and $\det{\overline{S}_{t}}$ are bounded strictly away from $0$. Since $\rho_{\max}(R)< 1$, we have that 
$Q=\lim_{t\rightarrow\infty }S_{t}$ is well defined, and that $\lim_{t\rightarrow\infty }\overline{S}_{t}=Q$. Thus
\begin{align}\nonumber
    \lim_{t\rightarrow\infty} \ln\left(\det{S_{t}}/\det{\overline{S}_{t}}\right) =0\text{ and }\lim_{t\rightarrow\infty} \text{Tr}(S_{t}^{-1}\overline{S}_{t})=m.
\end{align}
 We now prove $\lim_{t\rightarrow \infty}\mathbb{E}[\rvec{s}^{\tp}S_{t}^{-1}\rvec{s}]=0$. Fix $t$ and let $\rvec{p}_{t;r} = \sum_{i=t}^{r}R^{i}(\rvec{\omega}_{i}-L\rvec{\nu}_{i})$. For any $t$, $ \lim_{r\rightarrow \infty}\rvec{p}_{t;r}\rvec{p}_{t;r}^{\tp} = \rvec{s}_{>t}\rvec{s}_{>t}^{\tp}$. These limits are well defined by Kolmogorov's two-series theorm. Let $U = W+L\frac{\Delta^2}{12}L^{\tp}$. We have
\begin{IEEEeqnarray}{rCl}
    \mathbb{E}[\rvec{s}^{\tp}_{>t} \overline{S}^{-1}_{t}\rvec{s}_{>t} ]&=& \mathbb{E}[\text{Tr}\left(\overline{S}^{-1}_{t}\rvec{s}_{>t}\rvec{s}^{\tp}_{>t}\right)] \\ &\le&\text{Tr}\left(\overline{S}^{-1}_{t}\underset{r\rightarrow\infty}{\lim\inf}\text{ }\mathbb{E}\left[\rvec{p}_{t;r}\rvec{p}_{t;r}^{\tp} \right]\right)\label{eq:applytraceandfatou} .
\end{IEEEeqnarray} where (\ref{eq:applytraceandfatou}) follows from Fatou's lemma, and linearity. We have $\mathbb{E}\left[\rvec{p}_{t;r}\rvec{p}_{t;r}^{\tp} \right] = \sum_{i=t}^{r}R^{i}U(R^{\tp})^{i}$,  thus
\begin{multline}\label{eq:klapplygelf}
        \lim_{r\rightarrow\infty} \mathbb{E}\left[\rvec{p}_{t;r}\rvec{p}_{t;r}^{\tp} \right] =  R^{t}\left(\lim_{r\rightarrow\infty}\sum_{i=0}^{r}R^{i}U(R^{\tp})^{i}\right) (R^{\tp})^t,\nonumber
\end{multline} where the limit exists via Lemma \ref{lemm:gelfand}. Let $N = \left(\lim_{r\rightarrow\infty}\sum_{i=0}^{r}R^{i}\left(W+L\frac{\Delta^2}{12}L^{\tp} \right)(R^{\tp})^{i}\right)$. We have established that $\mathbb{E}[\rvec{s}^{\tp}_{>t} \overline{S}^{-1}_{t}\rvec{s}_{>t} ] \le \text{Tr}(\overline{S}^{-1}_{t}R^{t}N(R^{\tp})^{t})$. Thus since  $\rho_{\max}(R)<1$ taking the limit as $t\rightarrow \infty$ proves  $\underset{t\rightarrow\infty}{\lim}\mathbb{E}[\rvec{s}^{\tp}_{>t} \overline{S}^{-1}_{t}\rvec{s}_{>t} ]=0$. Thus, in conclusion $\lim_{t\rightarrow \infty } D(\rvec{q}_{t}||\rvec{q}) \le \lim_{t\rightarrow \infty } \frac{\mathbb{E}\left[(f_{t}(\rvec{s}_{>t})-m) \right]}{2} \le 0$. 
\end{proof} 
\vspace{-.5cm}
\section{Conclusion} 
For $m$ dimensional plants, we have proven the existence of a time-invariant data compression architecture and controller that can achieves any feasible LQG control performance given a feedback bitrate within approximately $2+1.26m$ bits of a known lower bound. Using the ``Shannon-type" source codes described in \cite[Section IV.A]{ourJSAIT}, it can be shown that this overhead can be reduced to $1+1.26m$ bits. The difference between the lower bound and the rate achieved by the dither free, but time-varying, achievability architecture in \cite{kostinaTradeoff} is logarithmic in the plant dimension (i.e. $\mathcal{O}(\log(m))$). This follows from the use of more sophisticated lattice quantizers \cite{kostinaTradeoff}. An opportunity for future work is to demonstrate TI achievability without the use of dithering and with more sophisticated quantization. 

While we have proved that the minimum bitrate can nearly be achieved with a time-invariant source codec, there still remains the problem of developing a practical implementation of this source coding scheme. Numerical experiments would lend credence to these theoretical results.  While the time-varying achievability approaches in  \cite{tanakaISIT} \cite{kostinaTradeoff} require the precise construction of codebooks in accordance with the PMF of the quantizer output, the long-term analyses of Section \ref{sec:ergopfs} could be useful in proving bounds on the performance of adaptive compression schemes based on, for example, \cite{frenchExp}. 

\vspace{-.09cm}
\bibliographystyle{IEEEtran}
\bibliography{refs.bib}

\begin{thebibliography}{10}
\providecommand{\url}[1]{#1}
\csname url@samestyle\endcsname
\providecommand{\newblock}{\relax}
\providecommand{\bibinfo}[2]{#2}
\providecommand{\BIBentrySTDinterwordspacing}{\spaceskip=0pt\relax}
\providecommand{\BIBentryALTinterwordstretchfactor}{4}
\providecommand{\BIBentryALTinterwordspacing}{\spaceskip=\fontdimen2\font plus
\BIBentryALTinterwordstretchfactor\fontdimen3\font minus
  \fontdimen4\font\relax}
\providecommand{\BIBforeignlanguage}[2]{{%
\expandafter\ifx\csname l@#1\endcsname\relax
\typeout{** WARNING: IEEEtran.bst: No hyphenation pattern has been}%
\typeout{** loaded for the language `#1'. Using the pattern for}%
\typeout{** the default language instead.}%
\else
\language=\csname l@#1\endcsname
\fi
#2}}
\providecommand{\BIBdecl}{\relax}
\BIBdecl

\bibitem{hjjournal}
H.~Jung, A.~Pedram, T.~Cuvelier, and T.~Tanaka, ``Optimized data rate
  allocation for dynamic sensor fusion over resource constrained communication
  networks,'' \emph{Int. J. Robust and Nonlinear Control}, 2021.

\bibitem{ourlblett}
T.~C. Cuvelier, T.~Tanaka, and R.~W. Heath, ``A lower-bound for variable-length
  source coding in linear-quadratic-gaussian control with shared randomness,''
  \emph{IEEE Contr. Syst. Lett.}, pp. 1--1, 2022.

\bibitem{silvaFirst}
E.~I. {Silva}, M.~S. {Derpich}, and J.~{Ostergaard}, ``A framework for control
  system design subject to average data-rate constraints,'' \emph{IEEE Trans.
  Automat. Cont.}, vol.~56, no.~8, pp. 1886--1899, 2011.

\bibitem{tanakaISIT}
T.~{Tanaka}, K.~H. {Johansson}, T.~{Oechtering}, H.~{Sandberg}, and
  M.~{Skoglund}, ``Rate of prefix-free codes in {LQG} control systems,'' in
  \emph{Proc. IEEE ISIT}, 2016, pp. 2399--2403.

\bibitem{kostinaTradeoff}
V.~{Kostina} and B.~{Hassibi}, ``Rate-cost tradeoffs in control,'' \emph{IEEE
  Trans. Automat. Cont.}, vol.~64, no.~11, pp. 4525--4540, 2019.

\bibitem{ourJSAIT}
T.~Cuvelier, T.~Tanaka, and R.~Heath, ``Prefix-free coding for {LQG} control,''
  \emph{arXiv:2204.00588}, 2022.

\bibitem{masseyDI}
J.~Massey, ``Causality, feedback and directed information,'' in \emph{Proc.
  IEEE ISIT}, 1990, pp. 303--305.

\bibitem{SDP_DI}
T.~{Tanaka}, P.~M. {Esfahani}, and S.~K. {Mitter}, ``{LQG} control with minimum
  directed information: Semidefinite programming approach,'' \emph{IEEE Trans.
  Automat. Contr.}, vol.~63, no.~1, pp. 37--52, 2018.

\bibitem{milanCorrection}
M.~S. Derpich and J.~Østergaard, ``Comments on {`A Framework for Control
  System Design Subject to Average Data-Rate Constraints'},''
  \emph{arXiv:2103.12897}, 2021.

\bibitem{fixlenalgs}
A.~Khina, Y.~Nakahira, Y.~Su, and B.~Hassibi, ``Algorithms for optimal control
  with fixed-rate feedback,'' in \emph{Proc. IEEE CDC}, 2017, pp. 6015--6020.

\bibitem{lowbrperiodic}
B.~Amini and R.~R. Bitmead, ``{LQG} control performance with low-bitrate
  periodic coding,'' \emph{IEEE Control of Netw. Syst.}, vol.~9, no.~1, pp.
  320--333, 2022.

\bibitem{guo2021optimal}
N.~Guo and V.~Kostina, ``Optimal causal rate-constrained sampling for a class
  of continuous {M}arkov processes,'' \emph{IEEE Transactions on Information
  Theory}, vol.~67, no.~12, pp. 7876--7890, 2021.

\bibitem{elemIT}
T.~M. Cover and J.~A. Thomas, \emph{Elements of Information Theory}.\hskip 1em
  plus 0.5em minus 0.4em\relax Wiley-Interscience, 1991.

\bibitem{RDE_convergence}
S.~{Chan}, G.~{Goodwin}, and K.~{Sin}, ``Convergence properties of the
  {Riccati} difference equation in optimal filtering of nonstabilizable
  systems,'' \emph{IEEE Trans. Automat. Contr.}, vol.~29, no.~2, pp. 110--118,
  1984.

\bibitem{dullRobust}
G.~Dullerud and F.~Paganini, \emph{A Course in Robust Control Theory}.\hskip
  1em plus 0.5em minus 0.4em\relax Springer, 2000.

\bibitem{ito_invariant}
Y.~Ito, ``Invariant measures for markov processes,'' \emph{Transactions of the
  American Mathematical Society}, vol. 110, no.~1, pp. 152--184, 1964.

\bibitem{mcmcReview}
G.~O. Roberts and J.~S. Rosenthal, ``{General state space Markov chains and
  MCMC algorithms},'' \emph{Probability Surveys}, vol.~1, no. none, pp. 20 --
  71, 2004.

\bibitem{zit}
\BIBentryALTinterwordspacing
G.~Žitković, ``Theory of probability (lecture notes),'' Fall 2013. [Online].
  Available:
  \url{https://web.ma.utexas.edu/users/gordanz/lecture\_notes\_page.html}
\BIBentrySTDinterwordspacing

\bibitem{polyanbk}
\BIBentryALTinterwordspacing
Y.~Polyanskiy and Y.~Wu, ``Lecture notes on information theory,'' 2019.
  [Online]. Available:
  \url{{http://people.lids.mit.edu/yp/homepage/data/itlectures\_v5.pdf}}
\BIBentrySTDinterwordspacing

\bibitem{frenchExp}
D.~Bontemps, ``Universal coding on infinite alphabets: Exponentially decreasing
  envelopes,'' \emph{IEEE Trans. Inf. Theory}, vol.~57, no.~3, pp. 1466--1478,
  2011.

\end{thebibliography}
\end{document}